\begin{document}
\title{Power Series Representations for Complex Bosonic Effective Actions.\\
        \Large III. Substitution and Fixed Point Equations}

\author{Tadeusz Balaban}
\affil{\small Department of Mathematics \authorcr
       Rutgers, The State University of New Jersey \authorcr
       tbalaban@math.rutgers.edu\authorcr
       \  }

\author{Joel Feldman\thanks{Research supported in part by the Natural 
                Sciences and Engineering Research Council 
                of Canada and the Forschungsinstitut f\"ur 
                Mathematik, ETH Z\"urich.}}
\affil{Department of Mathematics \authorcr
       University of British Columbia \authorcr
       feldman@math.ubc.ca \authorcr
       http:/\hskip-3pt/www.math.ubc.ca/\squig feldman/\authorcr
       \  }

\author{Horst Kn\"orrer}
\author{Eugene Trubowitz}
\affil{Mathematik \authorcr
       ETH-Z\"urich \authorcr
       knoerrer@math.ethz.ch, trub@math.ethz.ch \authorcr
       http:/\hskip-3pt/www.math.ethz.ch/\squig knoerrer/}


\maketitle

\begin{abstract}
\noindent
In \cite{CPR,CPC,CPS} we developed a polymer--like expansion
that applies when the (effective) action in a functional integral
is an analytic function of the fields being integrated. Here,
we develop methods to aid the application of this technique
when the method of steepest descent is used to analyze the functional
integral. We develop a version of the Banach fixed point theorem that
can be used to construct and control the critical fields, as
analytic functions of external fields, and substitution formulae
to control the change in norms that occurs when one replaces
the integration fields by the sum of the critical fields and
the fluctuation fields.  

\end{abstract}

\newpage
\tableofcontents

\newpage
\section{Introduction}
In \cite{CPR,CPC,CPS}, we developed a power series 
representation, norms and estimates for an effective
action of the form
\begin{equation*}
\ln\frac{ \int e^{f(\al_1,\cdots,\al_s;z^*,z)}\,d\mu(z^*,z)}
   {\int e^{f(0,\cdots,0;z^*,z)}\,d\mu(z^*,z)}
\end{equation*}
Here, $\,f(\al_1,\cdots,\al_s;z_*,z)\,$ is an analytic function of the 
complex fields $\al_1(\bx)$, $\cdots$, $\al_s(\bx)$, $z_*(\bx)$, $z(\bx)$ 
indexed by $\bx$ in a finite set $X$, and $d\mu(z^*,z)$ is a compactly 
supported product measure. This framework has been used in \cite{UV}.

In \cite{PAR1,PAR2} we combine these power series methods with the technique of 
the block spin renormalization group for functional integrals 
\cite{KAD,BalLausane,GK,BalPalaiseau,Dim1} to see, for a many particle system of 
weakly interacting Bosons in three space dimensions, the 
formation of a potential well of the type that typically leads to 
symmetry breaking in the thermodynamic limit. 
(For an overview, see \cite{ParOv}.)
A basic ingredient of  
this block spin/functional integral approach is a stationary phase 
argument for the effective actions. For this, it is necessary to 
construct and analyze ``critical fields'' at each step. These 
critical fields are themselves functions of some external fields. 
The ``background fields'' of the block spin approach arise as 
compositions of critical fields at several renormalization group steps 
and are also functions of some external fields.

In our construction \cite{PAR1,PAR2}, the ``background fields'' and 
``critical fields'' are analytic maps that are defined on a 
neighbourhood of the origin 
in an appropriate Hilbert space of fields and that take values in 
another Hilbert space of fields. We call such objects ``field maps''. 
See Definition \ref{defSUBkrnel},  where we also generalize the definition 
of the norm of a (complex valued) analytic function of fields 
\cite[Definition \defCPCnorms]{CPC} to field maps.

In \S\:\ref{secSUBsub} we prove bounds on compositions like 
\begin{equation*}
\tilde h(\al_1,\cdots,\al_s) 
= h\Big( A_1(\al_1,\cdots,\al_s),\cdots, A_r(\al_1,\cdots,\al_s)\Big)
\end{equation*}
in terms of bounds on $h$ and the $A_j$'s. Here, $h$ is a 
function of $r$ fields and $A_1,\cdots, A_r$ are field maps. See 
Proposition \ref{propSUBsubstitution} and Corollary \ref{corSUBsubstitution}.

The critical fields for each block spin renormalization group
transformation are critical configurations for some action. 
The equations that determine these critical configurations
can be expressed as (systems of) implicit equations of the type
\begin{equation*}
\ga = F(\al_1,\cdots,\al_s;\ga)
\end{equation*}
which have to be solved for $\ga$ as a function $\al_1,\cdots,\al_s$.
In \S\ref{secSUBsolve}, we prove the existence and uniqueness of, 
and bounds on, solutions to systems of equations of that type. 
See Proposition  \ref{propSUBeqnsoln}.

\newpage
\section{Field Maps}\label{secSUBfieldmaps}
For an abstract framework, we consider analytic functions 
$f(\al_1,\cdots,\al_s)$ of the complex fields $\al_1,\cdots,\al_s$ 
(none of which are ``history'' or source fields, in the terminology of
\cite{CPC}) on a finite set $X$. Here are some associated definitions
and notation from \cite{CPC}.

\begin{definition}[$n$--tuples]\label{defSUBntuples} 
\ 
\begin{enumerate}[label=(\alph*), leftmargin=*]
\item 
Let $n\in\bbbz$ with $n\ge 0$ and $\vec\bx=(\bx_1,\cdots,\bx_n)\in
X^n$ be an ordered $n$--tuple of points of $X$. 
We denote by $n(\vec\bx)=n$ the number of components of $\vec\bx$.
Set 
$
\al(\vec\bx)=\al(\bx_1)\cdots \al(\bx_n).
$
If $n(\vec\bx)=0$, then $\al(\vec\bx)=1$.

\item 
For each $s\in\bbbn$, we denote\footnote{
    We distinguish between $X^{n_1}\times\cdots\times X^{n_s}$
    and $X^{n_1+\cdots+n_s}$. We use $X^{n_1}\times\cdots\times X^{n_s}$
    as the set of possible arguments for $\al_1(\vec\bx_1)\cdots\al_s(\vec\bx_s)$,
    while $X^{n_1+\cdots+n_s}$ is the set of possible arguments for
    $\al_1(\vec\bx_1\circ\cdots\circ\vec\bx_s)$, where $\circ$ is the concatenation
    operator of part (c).}
\begin{equation*}
\bX^{(s)}=\bigcup_{n_1,\cdots,n_s\ge 0}X^{n_1}\times\cdots\times X^{n_s}
\end{equation*}
If $(\vec\bx_1,\cdots,\vec\bx_{s-1})\in \bX^{(s-1)}$ then
$(\vec\bx_1,\cdots,\vec\bx_{s-1},-)$ denotes the element of $\bX^{(s)}$
having $n(\vec\bx_s)=0$. In particular, $X^0=\{-\}$ and $\al(-)=1$.

\item 
We define the concatenation of $\vec\bx=(\bx_1,\cdots,\bx_n)\in X^n$ 
and $\vec\by=(\by_1,\cdots,\by_m)\in X^m$ to be 
\begin{equation*}
\vec \bx\circ\vec \by=\big(\bx_1,\cdots,\bx_n,\by_1,\cdots,\by_m)\in X^{n+m}
\end{equation*}
For $(\vec\bx_1,\cdots,\vec\bx_s),\ 
(\vec\by_1,\cdots,\vec\by_s)\in \bX^{(s)}$
\begin{equation*}
(\vec\bx_1,\cdots,\vec\bx_s)\circ (\vec\by_1,\cdots,\vec\by_s)
=(\vec\bx_1\circ\vec\by_1,\cdots,\vec\bx_s\circ\vec\by_s)
\end{equation*}
\end{enumerate}
\end{definition}

\begin{definition}[Coefficient Systems]\label{defSUBcoeffssys}
\ 
\begin{enumerate}[label=(\alph*), leftmargin=*]
\item 
A coefficient system of length $s$  is a function
$a(\vec\bx_1,\cdots,\vec\bx_s)$ which assigns a complex number to each
$(\vec\bx_1,\cdots,\vec\bx_s)\in\bX^{(s)}$. It is called symmetric if,
for each $1\le j\le s$, $a(\vec\bx_1,\cdots,\vec\bx_s)$ is invariant under
permutations of the components of $\vec\bx_j$.

\item
Let  $f(\al_1,\cdots,\al_s)$ be a function which is defined and analytic on
a neighbourhood of the origin in $\bbbc^{s|X|}$.  Then $f$ has a unique
expansion of the form
\begin{equation*}
f(\al_1,\cdots,\al_s)
=\sum_{(\vec\bx_1,\cdots,\vec\bx_s)\in\bX^{(s)}}
a(\vec\bx_1,\cdots,\vec\bx_s)\ \al_1(\vec\bx_1)\cdots \al_s(\vec\bx_s)
\end{equation*}
with $a(\vec\bx_1,\cdots,\vec\bx_s)$ a symmetric coefficient system.
This coefficient system is called the symmetric coefficient system of $f$.

\end{enumerate}
\end{definition}

We assume that we are given a metric $d$ on a finite set $X$ and 
\emph{constant} weight factors $\ka_1,\cdots,\ka_s$. 
In this environment \cite[Definition \defCPCnorms]{CPC}, 
for the norm of the function 
\begin{equation*}
f(\al_1,\cdots,\al_s)
=\sum_{(\vec\bx_1,\cdots,\vec\bx_s)\in\bX^{(s)}}
a(\vec\bx_1,\cdots,\vec\bx_s)\ \al_1(\vec\bx_1)\cdots \al_s(\vec\bx_s)
\end{equation*}
with $a(\vec\bx_1,\cdots,\vec\bx_s)$ a symmetric coefficient system, 
simplifies to
\begin{equation}\label{eqnSUBsimplenorm}
\|f\|_w=\big|a(-)\big|
+\sum_{\atop{n_1,\cdots,n_s\ge 0}{n_1+\cdots+n_s\ge 1}}\max_{\bx\in X}
\max_{\atop{1\le j\le s}{n_j\ne 0}}\max\limits_{1\le i\le n_j}
\sum_{\atop{  \atop{\vec\bx_\ell\in X^{n_\ell}}{1\le\ell\le s} }
           { {(\vec\bx_j)}_i=\bx}  }
\big|a(\vec\bx_1,\cdots,\vec\bx_s)\big|\ka_1^{n_1}\cdots\ka_s^{n_s}
e^{\tau_d(\vec\bx_1,\cdots,\vec\bx_s)}
\end{equation}
where $\tau_d(\vec\bx_1,\cdots,\vec\bx_s)$ denotes the length of the 
shortest tree in $X$ whose set of vertices contains all of the 
points in the $\vec x_j$'s. The family of functions 
\begin{equation*}
w(\vec\bx_1,\cdots,\vec\bx_s)
=\ka_1^{n(\vec\bx_1)}\cdots\ka_s^{(\vec\bx_s)}
e^{\tau_d(\vec\bx_1,\cdots,\vec\bx_s)}
\end{equation*}
is called the weight system with metric $d$ that associates 
the weight factor $\ka_j$ to the field $\al_j$. 

We need to extend these definitions to functions $A(\al_1,\cdots,\al_s)$
that take values in $\bbbc^X$, rather than $\bbbc$. That is, which map
fields $\al_1,\cdots,\al_s$ to another field $A(\al_1,\cdots,\al_s)$.
A trivial example would be $A(\al)(\bx)=\al(\bx)$.

\begin{definition}\label{defSUBkrnel}
\ 
\begin{enumerate}[label=(\alph*), leftmargin=*]
\item
An $s$--field map kernel is a function
\begin{equation*}
A:(\bx;\vec\bx_1,\cdots,\vec\bx_s)\in X\times\bX^{(s)}\mapsto
A(\bx;\vec\bx_1,\cdots,\vec\bx_s)\in\bbbc
\end{equation*}
which obeys $A(\bx;-,\cdots,-)=0$ for all $\bx\in X$.

\item 
If $A$ is an $s$--field map kernel, we define the
``$s$--field map'' $(\al_1,\cdots,\al_s)\mapsto A(\al_1,\cdots,\al_s)$ by
\begin{equation*}
A(\al_1,\cdots,\al_s)(\bx)=\sum_{(\vec\bx_1,\cdots,\vec\bx_s)\in\bX^{(s)}}
A(\bx;\vec\bx_1,\cdots,\vec\bx_s)\ \al_1(\vec\bx_1)\cdots \al_s(\vec\bx_s)
\end{equation*}

\item
We define the norm $\tn A\tn_w$ of the $s$--field map kernel
$A$ by
\begin{equation*}
\tn A\tn_w=
    \sum_{\atop{n_1,\cdots,n_s\ge 0}{n_1+\cdots+n_s\ge 1}}
    \big\|A\big\|_{w;n_1,\cdots,n_s}
\end{equation*}
where
\begin{align*}
\big\|A\big\|_{w;n_1,\cdots,n_s}
  =\max\big\{ L(A;w;n_1,\cdots,n_s)\,,\,R(A;w;n_1,\cdots,n_s)\big\}
\end{align*}
and
\begin{align*}
L(A;w;n_1,\cdots,n_s)&=\max_{\bx\in X}
     \sum_{\atop{\vec\bx_\ell\in X^{n_\ell}}{1\le\ell\le s}}
     \big|A(\bx;\vec\bx_1,\cdots,\vec\bx_s)\big|
     \ka_1^{n_1}\cdots\ka_s^{n_s}
      e^{\tau_d(\bx,\vec\bx_1,\cdots,\vec\bx_s)}\\
R(A;w;n_1,\cdots,n_s)&=\max_{\bx'\in X}
    \max_{\atop{1\le j\le s}{n_j\ne 0}}
    \max\limits_{1\le i\le n_j}\hskip-1pt
    \sum_{\bx\in X}\hskip-1pt
    \sum_{\atop{ \atop{\vec\bx_\ell\in X^{n_\ell}}{1\le\ell\le s} }
           { {(\vec\bx_j)}_i=\bx'} }\hskip-3pt
\big|A(\bx;\vec\bx_1,\cdots,\vec\bx_s)\big|
      \ka_1^{n_1}\cdots\ka_s^{n_s} \\\noalign{\vskip-0.4in}&\hskip3.4in
       e^{\tau_d(\bx,\vec\bx_1,\cdots,\vec\bx_s)}
\end{align*}
We also denote the norm of the corresponding $s$--field map 
$A(\al_1,\cdots,\al_s)$ by $\tn A\tn_w$.
\end{enumerate}
\end{definition}

\begin{remark}\label{remSUBsubtofn}
We associate to each $s$--field map kernel $A$ the analytic function
\begin{align*}
f_A(\be;\al_1,\cdots,\al_s)
&=\sum_{\bx\in X} \be(\bx)A(\al_1,\cdots,\al_s)(\bx)\cr
&=\sum_{\atop{(\vec\bx_1,\cdots,\vec\bx_s)\in\bX^{(s)}}{\bx\in X}}
A(\bx;\vec\bx_1,\cdots,\vec\bx_s)\ \be(\bx)\al_1(\vec\bx_1)\cdots \al_s(\vec\bx_s)
\end{align*}
Denote by $\hat w$ the weight system with metric $d$ that associates the
weight factor $\ka_j$ to $\al_j$, for each $1\le j\le s$, and 
the weight factor $1$ to $\be$. Then
\begin{equation*}
\|f_A\|_{\hat w}=\tn A\tn_w
\end{equation*}
\end{remark}

\begin{lemma}[Young's Inequality]\label{lemSUBLp}
Let  $d_1,\cdots,d_s\ge 0 $ be integers.

\begin{enumerate}[label=(\alph*), leftmargin=*]
\item
 Let $f(\al_1,\cdots,\al_s)$ be a function which is defined and analytic on
a neighbourhood of the origin in $\bbbc^{s|X|}$ and is of degree at least
$d_i$ in the field $\al_i$. Furthermore let $p_1,\cdots,p_s \in (0,\infty]$
be such that $\ \sum\limits_{j=1}^s \sfrac{d_j}{p_j} =1$.
Then, for all fields $\al_1,\cdots,\al_s$ such
that  $|\al_j(\bx)|\le\ka_j$ for all $\bx\in X$ and $1\le j\le s$,
\begin{equation*}
\big| f(\al_1,\cdots,\al_s) \big| 
\le \|f\|_w \,\smprod_{j=1}^s \big( \sfrac{1}{\ka_j}  \|\al_j\|_{p_j}\big)^{d_j}
\end{equation*}
where
$\
\|\al\|_p =\big(  \sum\limits_{x\in X} |\al(x)|^p \big)^{1/p}
\ $
denotes the $L^p$ norm of $\al$.

\item
Let
$(\al_1,\cdots,\al_s)\mapsto A(\al_1,\cdots,\al_s)$ 
be an $s$--field map which is of degree at least
$d_i$ in the field $\al_i$. Furthermore let $p, p_1,\cdots,p_s \in (0,\infty]$
be such that 
$\ \sum\limits_{j=1}^s \sfrac{d_j}{p_j} =\sfrac{1}{p}$.
Then, for fields $\al_1,\cdots,\al_s$ such
that  $|\al_j(\bx)|\le\ka_j$ for all $\bx\in X$ and $1\le j\le s$, the
$L^p$ norm of the field $A(\al_1,\cdots,\al_s)$ is bounded by
\begin{equation*}
\big\| A(\al_1,\cdots,\al_s) \big\|_p 
\le \tn A\tn_w \,\smprod_{j=1}^s \big( \sfrac{1}{\ka_j}  \|\al_j\|_{p_j}\big)^{d_j}
\end{equation*}
In particular
\begin{equation*}
\max_{\bx\in X}\big|A(\al_1,\cdots,\al_s)(\bx)\big|\le \tn A\tn_w
\end{equation*}
\end{enumerate}
\end{lemma}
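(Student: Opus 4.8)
The plan is to establish (a) directly and then deduce (b) from (a) via Remark \ref{remSUBsubtofn} together with $L^p$--$L^{p'}$ duality on the finite set $X$; the final assertion $\max_\bx\big|A(\al_1,\cdots,\al_s)(\bx)\big|\le\tn A\tn_w$ will be the special case $p=\infty$, $d_1=\cdots=d_s=0$ of (b).

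For (a), write $f$ in terms of its symmetric coefficient system $a$. Every nonzero coefficient $a(\vec\bx_1,\cdots,\vec\bx_s)$ has $n(\vec\bx_j)\ge d_j$ for all $j$ (degree assumption), and some $d_{j_0}\ge1$ since $\sum_j\sfrac{d_j}{p_j}=1$, so $a(-,\cdots,-)=0$. Applying the triangle inequality, then keeping $|\al_j|$ on the first $d_j$ components of $\vec\bx_j$ while bounding $|\al_j|\le\ka_j$ on the other $n(\vec\bx_j)-d_j$ components, and finally inserting $e^{\tau_d(\vec\bx_1,\cdots,\vec\bx_s)}\ge1$, one reduces --- for each fixed multi-index $(n_1,\cdots,n_s)=(n(\vec\bx_1),\cdots,n(\vec\bx_s))$ --- to a sum $\sum_{\zeta_1,\cdots,\zeta_N}g(\vec\zeta)\smprod_{k=1}^N u_k(\zeta_k)$, where $N=\sum_j d_j$, the variables $\zeta_k$ run over the retained components, $u_k=|\al_{j(k)}|$ for the field $\al_{j(k)}$ carrying $\zeta_k$, and $g(\vec\zeta)$ equals $|a|\smprod_j\ka_j^{n_j}e^{\tau_d}$ summed over the non-retained components. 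The key step is the elementary \emph{generalized H\"older inequality}: if $g\ge0$ on $X^N$, $u_k\ge0$ on $X$, and $t_1,\cdots,t_N\ge0$ with $\sum_k t_k=1$, then
\begin{equation*}
\sum_{\zeta_1,\cdots,\zeta_N}g(\vec\zeta)\smprod_{k=1}^N u_k(\zeta_k)
\ \le\ \Big(\smprod_{k=1}^N\|u_k\|_{1/t_k}\Big)\
\max_{1\le k\le N}\ \max_{\zeta\in X}\ \sum_{\zeta_k=\zeta}g(\vec\zeta)\,,
\end{equation*}
with $\|u\|_{1/0}:=\|u\|_\infty$ and the last sum over all components except $\zeta_k$. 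This follows by first pulling out in $\|\cdot\|_\infty$ any $u_k$ with $t_k=0$ (which only shrinks the right side), then writing $g=\smprod_k g^{t_k}$, applying ordinary H\"older with exponents $1/t_k$, and using $\sum_{\vec\zeta}g\,u_k(\zeta_k)^{1/t_k}=\sum_{\zeta}u_k(\zeta)^{1/t_k}\sum_{\zeta_k=\zeta}g\le\|u_k\|_{1/t_k}^{1/t_k}\max_{\zeta}\sum_{\zeta_k=\zeta}g$. Applying this with $t_k=\sfrac1{p_{j(k)}}$ --- so that $\sum_k t_k=\sum_j\sfrac{d_j}{p_j}=1$ and $\smprod_k\|u_k\|_{1/t_k}=\smprod_j\|\al_j\|_{p_j}^{d_j}$ --- and then summing over $(n_1,\cdots,n_s)$ and cancelling the prefactor $\smprod_j\ka_j^{-d_j}$, one obtains the claim, provided the resulting ``frozen component'' sums are recognised as the terms of the norm \eqref{eqnSUBsimplenorm}.

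For (b) take first $p\in[1,\infty]$ with conjugate exponent $p'$. By Remark \ref{remSUBsubtofn}, $f_A(\be;\al_1,\cdots,\al_s)=\sum_\bx\be(\bx)A(\al_1,\cdots,\al_s)(\bx)$ satisfies $\|f_A\|_{\hat w}=\tn A\tn_w$, has degree exactly $1$ in $\be$ and degree at least $d_j$ in $\al_j$, and $\hat w$ assigns $\be$ the weight $1$. Since $\sfrac1{p'}+\sum_j\sfrac{d_j}{p_j}=1$, part (a) yields, for $\|\be\|_\infty\le1$ and $|\al_j(\bx)|\le\ka_j$,
\begin{equation*}
\Big|\,\sum_\bx\be(\bx)A(\al_1,\cdots,\al_s)(\bx)\,\Big|
\ \le\ \tn A\tn_w\ \|\be\|_{p'}\ \smprod_{j=1}^s\big(\sfrac1{\ka_j}\|\al_j\|_{p_j}\big)^{d_j}\,.
\end{equation*}
Replacing $\be(\bx)$ by $\be(\bx)\,\overline{A(\al_1,\cdots,\al_s)(\bx)}/\big|A(\al_1,\cdots,\al_s)(\bx)\big|$ (which changes no $\|\cdot\|_q$) and taking the supremum over $\be\ge0$ with $\|\be\|_{p'}\le1$ --- for which $\|\be\|_\infty\le\|\be\|_{p'}\le1$ --- gives $\big\|A(\al_1,\cdots,\al_s)\big\|_p\le\tn A\tn_w\,\smprod_j\big(\sfrac1{\ka_j}\|\al_j\|_{p_j}\big)^{d_j}$ by $L^p$--$L^{p'}$ duality. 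The choice $p=\infty$, $p'=1$, $d_1=\cdots=d_s=0$ then gives the final assertion.

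The step I expect to be the real obstacle is the identification, term by term in $(n_1,\cdots,n_s)$, of the ``frozen component'' quantity produced by the generalized H\"older inequality with the quantity appearing under the nested maxima in \eqref{eqnSUBsimplenorm}: this uses the symmetry of $a$ to relabel the frozen component as any prescribed $(\vec\bx_j)_i$, and requires checking that restricting the outer $\max$ (over the $N$ retained components) is dominated by the $\max$ over all components. One should also note that the duality argument in (b) needs $p\ge1$, i.e.\ $\sum_j\sfrac{d_j}{p_j}\le1$; this is the range used in the applications and in particular suffices for the displayed consequence.
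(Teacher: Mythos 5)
Your proposal is correct and follows essentially the same route as the paper: part (a) is the generalised H\"older/Young inequality applied with $L^{p_j}$ on $d_j$ retained components of $\vec\bx_j$ and $L^\infty$ (i.e.\ the bound $|\al_j|\le\ka_j$) on the remaining ones, with the ``frozen component'' sums dominated by the terms of \eqref{eqnSUBsimplenorm}; your inline inequality is just Lemma \ref{lemSUBgenLoneLinfty} specialised to counting measure. Part (b) is likewise the paper's argument, with the $L^p$--$L^{p'}$ duality supremum in place of the paper's explicit extremising choice of $\be$ (and your observation that $\|\be\|_\infty\le\|\be\|_{p'}$ on a finite set tidies a point the paper leaves implicit).
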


\begin{proof}
(a)
By the definition \eqref{eqnSUBsimplenorm} of $ \|f\|_w$, we may assume that
$f$ is of the form
\begin{equation*}
f(\al_1,\cdots,\al_s)
=\sum_{    \atop{\vec\bx_\ell\in X^{n_\ell}}{1\le\ell\le s}    }
a(\vec\bx_1,\cdots,\vec\bx_s)\ \al_1(\vec\bx_1)\cdots \al_s(\vec\bx_s)
\end{equation*}
with  a symmetric coefficient $a$ and $n_\ell \ge d_\ell$.
Now apply Lemma \ref{lemSUBgenLoneLinfty} with
$\ K = a \prod\limits_{j=1}^s \ka_j^{d_j} \ $,
where we use the $L^{p_j}$ norm for the first $d_j$ components of the variable
$\vec x_j$, and the $L^\infty$ norm for the last $n_j-d_j$ components of
this variable.

\Item (b)
As in Remark \ref{remSUBsubtofn} set
\begin{equation*}
f_A(\be;\al_1,\cdots,\al_s)
=\sum_{\bx\in X} \be(\bx)A(\al_1,\cdots,\al_s)(\bx)
\end{equation*}
 As in \cite[Theorem 4.2]{LL} choose
\begin{equation*}
\be(\bx) = e^{-i\theta(x)} |A(\al_1,\cdots,\al_s)(\bx)|^{p/p'}
\end{equation*}
where $\theta(x)$ is defined by 
$\
A(\al_1,\cdots,\al_s)(x)= e^{i\theta(x)} |A(\al_1,\cdots,\al_s)(x)|
\ $
and 
$\ \sfrac{1}{p} + \sfrac{1}{p'} =1
$.
By part (a) and Remark \ref{remSUBsubtofn}
\begin{align*}
 \big\|A(\al_1,\cdots,\al_s)\big\|_p^p &= \big|f_A(\be;\al_1,\cdots,\al_s)\big| 
\le  \tn A\tn_w \,  \|\be\|_{p'}
\smprod_{j=1}^s \big( \sfrac{1}{\ka_j}  \|\al_j\|_{p_j}\big)^{d_j}\\
&=  \tn A\tn_w \,  \big\|A(\al_1,\cdots,\al_s)\big\|_p^{p/p'}
\smprod_{j=1}^s \big( \sfrac{1}{\ka_j}  \|\al_j\|_{p_j}\big)^{d_j}
\end{align*}
\end{proof}

\begin{remark}\label{remSUBlinear}
A linear map $L:\bbbc^X\rightarrow\bbbc^X$ can be thought of
as a $1$--field map kernel. The relation between the norm $\tn L\tn_w$
as a field map kernel and the norm $\tn L\tn$ as in
\cite[Definition \defCPCtriplenorm]{CPC} is
\begin{equation*}
\tn L\tn_w=\ka_1\tn L\tn
\end{equation*}
The field $L(\al_1)$ is
\begin{equation*}
L(\al_1)(\bx)=\sum_{\by\in X}L(\bx,\by)\al_1(\by)
\end{equation*}
\end{remark}

\begin{remark}\label{remSUBdifferentLattices}
In Definition \ref{defSUBkrnel}, we have assumed, for simplicity,
that the field map $A$ maps fields $\al_1,\cdots,\al_s$ on a set $X$ 
to a field $A(\al_1,\cdots,\al_s)$ on the same set $X$. We will apply
this definition and the results later in this paper when the input fields
$\al_1,\cdots,\al_s$ are defined on a subset $X_1\subset X$
and the output field $A(\al_1,\cdots,\al_s)$ is defined on a, possibly
different, subset $X_2\subset X$. We extend Definition \ref{defSUBkrnel} and
the results later in this paper to cover this setting by viewing
$\al_1,\cdots,\al_s$
and $A(\al_1,\cdots,\al_s)$ to be fields on $X$ --- set 
$\al_1,\cdots,\al_s$ to zero on $X\setminus X_1$ and $A(\al_1,\cdots,\al_s)$
to zero on $X\setminus X_2$.
\end{remark}

\newpage
\section{Substitution}\label{secSUBsub}

We now proceed to prove bounds on compositions like 
\begin{equation*}
\tilde h(\al_1,\cdots,\al_s) 
= h\Big( A_1(\al_1,\cdots,\al_s),\cdots, A_r(\al_1,\cdots,\al_s)\Big)
\end{equation*}
in terms of bounds on $h$ and the $A_j$'s.

\begin{lemma}\label{lemSUBdiff}
Let  $\la_1,\ \cdots,\ \la_s$ be constant weight factors and
let  $w_\de$ be the weight system with metric $d$ that associates 
the weight factor $\ka_j$ to $\al_j$ and $\la_j$ to a field $\de_j$. 
Fix any $\si\ge 1$ and
let  $w_\si$ be the weight system with metric $d$ that associates 
the weight factor $\ka_j+\si\la_j$ to $\al_j$. 
\begin{enumerate}[label=(\alph*), leftmargin=*]
\item
Let $f(\al_1,\cdots,\al_s)$ be an analytic function on a
neighbourhood of the origin in $\bbbc^{s|X|}$. Set
\begin{equation*}
\de f\big(\al_1,\cdots,\al_s,\de_1,\cdots,\de_s\big)
=  f\big(\al_1+\de_1,\cdots,\al_s+\de_s\big)
- f\big(\al_1,\cdots,\al_s\big)
\end{equation*}
Then
\begin{equation*}
\|\de f\|_{w_\de}\le \sfrac{1}{\si} \|f\|_{w_\si}
\end{equation*}
More generally, if $p\in\bbbn$ and
$\de f^{(\ge p)}\big(\al_1,\cdots,\al_s,\de_1,\cdots,\de_s\big)$ 
is the part of $\de f$
that is of degree at least $p$ in $\big(\de_1,\cdots,\de_s\big)$, then
\begin{equation*}
\|\de f^{(\ge p)}\|_{w_\de}\le \sfrac{1}{\si^p} \|f\|_{w_\si}
\end{equation*}

\item
Let $A$ be an $s$--field map and define the $2s$--field map 
$\de A$ by
\begin{equation*}
\de A\big(\al_1,\cdots,\al_s,\de_1,\cdots,\de_s\big)
=  A\big(\al_1+\de_1,\cdots,\al_s+\de_s\big)
- A\big(\al_1,\cdots,\al_s\big)
\end{equation*}
Then
\begin{equation*}
\tn\de A\tn_{w_\de}\le \sfrac{1}{\si} \tn A\tn_{w_\si}
\end{equation*}

\end{enumerate}
\end{lemma}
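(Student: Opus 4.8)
The plan is to prove (a) directly, by tracking symmetric coefficient systems through the substitution $\al_j\mapsto\al_j+\de_j$, and then to get (b) from (a) by applying it to the function $f_A$ of Remark~\ref{remSUBsubtofn}, with $\be$ playing the role of an inert ``spectator'' field.

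For (a), I would first reduce to the case that $f$ is homogeneous of a single multidegree $(n_1,\ldots,n_s)$: the norm \eqref{eqnSUBsimplenorm} is a sum over multidegrees, and the part of $\de f^{(\ge p)}$ of multidegree $(k_1,\ldots,k_s)$ in the $\al$'s and $(\ell_1,\ldots,\ell_s)$ in the $\de$'s is fed by the part of $f$ of multidegree $n_j=k_j+\ell_j$ and by no other, so the estimates add up. For homogeneous $f$ with symmetric coefficient system $a$, expanding each $(\al_j+\de_j)(\vec\bx_j)$ by the binomial rule shows that the part $g_{(k),(\ell)}$ of $f(\al+\de)$, with $k_j+\ell_j=n_j$, has symmetric coefficient system $\big(\prod_j\binom{n_j}{k_j}\big)\,a(\vec\by_1\circ\vec\bz_1,\ldots,\vec\by_s\circ\vec\bz_s)$, where $\vec\by_j\in X^{k_j}$ records the slots of $\vec\bx_j$ assigned to $\al_j$ and $\vec\bz_j\in X^{\ell_j}$ those assigned to $\de_j$ (the binomial counts these assignments; symmetry of $a$ is what makes $a(\vec\by_1\circ\vec\bz_1,\ldots)$ symmetric within each $\vec\by_j$ and each $\vec\bz_j$, i.e.\ genuinely the symmetric coefficient system). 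Next I would bound $\|g_{(k),(\ell)}\|_{w_\de}$, a single multidegree term of \eqref{eqnSUBsimplenorm}. Since $(\vec\by_j,\vec\bz_j)\mapsto\vec\by_j\circ\vec\bz_j$ is a bijection $X^{k_j}\times X^{\ell_j}\to X^{n_j}$ under which pinning a component of $\vec\by_j$ or $\vec\bz_j$ corresponds to pinning a component of the concatenation, and since $\tau_d$ depends only on the set of points involved, the ``pin a component / sum over the rest'' sum defining $\|g_{(k),(\ell)}\|_{w_\de}$ equals the corresponding sum written over the $\vec\bx_j=\vec\by_j\circ\vec\bz_j$; symmetry of $a$ then makes the pinned component irrelevant, and comparison with \eqref{eqnSUBsimplenorm} for $f$ yields $\|g_{(k),(\ell)}\|_{w_\de}\le\big(\prod_j\binom{n_j}{k_j}(\la_j/\ka_j)^{\ell_j}\big)\|f\|_w$ (assuming, as usual, that the weight factors are positive).

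Summing over $k_j\le n_j$ with $\sum_j(n_j-k_j)\ge p$, and setting $\ell_j=n_j-k_j$, $t_j=\la_j/\ka_j$, gives $\|\de f^{(\ge p)}\|_{w_\de}\le\|f\|_w\sum_{\ell}\prod_j\binom{n_j}{\ell_j}t_j^{\ell_j}$, summed over $0\le\ell_j\le n_j$ with $\sum_j\ell_j\ge p$. On the other hand, for homogeneous $f$ one has $\|f\|_{w_\si}=\|f\|_w\prod_j(1+\si t_j)^{n_j}=\|f\|_w\sum_{\ell}\prod_j\binom{n_j}{\ell_j}t_j^{\ell_j}\,\si^{\sum_j\ell_j}$. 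Since $\si\ge1$, each term with $\sum_j\ell_j\ge p$ carries a factor $\si^{\sum_j\ell_j-p}\ge1$ on the right, and the remaining terms there are nonnegative, so the first sum is at most $\si^{-p}\|f\|_{w_\si}/\|f\|_w$; hence $\|\de f^{(\ge p)}\|_{w_\de}\le\si^{-p}\|f\|_{w_\si}$, and summing over the multidegrees of the original $f$ finishes (a). For (b) I would put $f_A(\be;\al_1,\ldots,\al_s)=\sum_{\bx}\be(\bx)A(\al_1,\ldots,\al_s)(\bx)$ as in Remark~\ref{remSUBsubtofn}, so $\|f_A\|_{\hat w}=\tn A\tn_w$, and apply part (a) to $f_A$, shifting $\al_j$ by $\de_j$ of weight $\la_j$ and $\be$ by an auxiliary $\de_\be$ of weight $0$: this gives $\|\de f_A\|_{w_\de}\le\si^{-1}\|f_A\|_{w_\si}$. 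Because $\de_\be$ has weight $0$, only the degree-$0$-in-$\de_\be$ part of $\de f_A$ contributes on the left, and that part is exactly $\sum_{\bx}\be(\bx)\,\de A(\al,\de)(\bx)$, whose norm is $\tn\de A\tn_{w_\de}$; on the right $\be$ carries weight $1+\si\cdot0=1$, so $\|f_A\|_{w_\si}=\tn A\tn_{w_\si}$ for the $w_\si$ of the statement, and thus $\tn\de A\tn_{w_\de}\le\si^{-1}\tn A\tn_{w_\si}$.

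The main obstacle is the combinatorial bookkeeping in (a): getting the binomial factors in the coefficient system of $f(\al+\de)$ right and, above all, checking that the ``pin one slot / sum over the rest / maximize over base point and slot'' structure of \eqref{eqnSUBsimplenorm} is compatible with the splitting of each $\vec\bx_j$ into $(\vec\by_j,\vec\bz_j)$ --- this is where the symmetry of $a$ is used essentially. Everything after that is the elementary binomial inequality, which is the only place the hypothesis $\si\ge1$ enters, and (b) is then a formal consequence of (a) and Remark~\ref{remSUBsubtofn}.
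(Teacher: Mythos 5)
Your proposal is correct and follows essentially the same route as the paper: expand $f(\al+\de)$ via the symmetry of the coefficient system to get the binomial factors $\prod_j\binom{n_j}{\ell_j}$, bound the resulting sum over $\{\ell_j\}$ with $\sum_j\ell_j\ge p$ by term-by-term comparison with the binomial expansion of $\si^{-p}\prod_j(\ka_j+\si\la_j)^{n_j}$ (the only place $\si\ge1$ enters), and deduce (b) from (a) via Remark~\ref{remSUBsubtofn}. The only cosmetic difference is that you normalize by $\|f\|_w$, which introduces the ratios $\la_j/\ka_j$ and hence an inessential positivity assumption on the $\ka_j$ that the paper's unnormalized comparison avoids.
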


\begin{proof}
Let $a(\vec \bx_1,\cdots,\vec \bx_s)$ be a symmetric coefficient system 
for $f$. Since $a$ is invariant under permutation of its $\vec\bx_j$ components,
\begin{align*}
f\big(\al_1+\de_1,\cdots,\al_s+\de_s\big)
&=\sum_{(\vec\bx_1,\cdots,\vec\bx_s)\in\bX^{(s)}}
a(\vec\bx_1,\cdots,\vec\bx_s)\ 
        (\al_1+\de_1)(\vec\bx_1)\cdots(\al_s+\de_s)(\vec\bx_s)\\
&\hskip-25pt=\hskip-5pt\sum_{\atop{(\vec\bx_1,\cdots,\vec\bx_s)\in\bX^{(s)}}
                       {(\vec\by_1,\cdots,\vec\by_s)\in\bX^{(s)}} }\hskip-15pt
a(\vec\bx_1\circ\vec\by_1,\cdots,\vec\bx_s\circ\vec\by_s)\ 
            \prod_{j=1}^s\smchoose{n(\vec\bx_j)+n(\vec\by_j)}{n(\vec\by_j)}
             \al_j(\vec\bx_j)\de_j(\vec\by_j)
\end{align*}
so that
\begin{align*}
&\de a_p(\vec \bx_1,\!\cdots\!\bx_s ;\vec \by_1,\!\cdots\!,\by_s) \\
&\hskip1in=\chi\big(n(\vec\by_1)+\cdots+n(\vec\by_s)\ge p\big)
a(\vec\bx_1\circ\vec\by_1,\cdots,\vec\bx_s\circ\vec\by_s)\ 
            \prod_{j=1}^s\smchoose{n(\vec\bx_j)+n(\vec\by_j)}{n(\vec\by_j)}
\end{align*}
is a symmetric coefficient system for $\de f^{(\ge p)}$. Of course
$\de f=\de f^{(\ge 1)}$. By definition
\begin{align*}
\|\de f^{(\ge p)}\|_{w_\de}
&=\sum_{\atop{\atop{k_1,\cdots,k_s\ge 0}{\ell_1,\cdots,\ell_s\ge 0}}
             {\ell_1+\cdots+\ell_s\ge p}}
     \max_{\bx\in X}
    \max_{\atop{1\le j\le s}{k_j+\ell_j\ne 0}}
    \max_{1\le i\le k_j+\ell_j}
    \sum_{\atop{ \atop{\vec\bx_m\in X^{k_m}}{\vec\by_m\in X^{\ell_m}} }
               { {(\vec\bx_j\circ\vec\by_j)}_i=\bx } }
\big|\de a(\vec \bx_1,\!\cdots\!\bx_s ;\vec \by_1,\!\cdots\!,\by_s)\big|
\\\noalign{\vskip-0.4in}&\hskip3.4in
e^{\tau_d(\vec\bx_1\circ\vec\by_1,\cdots,\vec\bx_s\circ\vec\by_s)}
\prod_{j=1}^s\ka_j^{k_j}\la_j^{\ell_j}\\
&=\sum_{ \atop{ \atop{k_1,\cdots,k_s\ge 0}{\ell_1,\cdots,\ell_s\ge 0} }
              {\ell_1+\cdots+\ell_s\ge p} }
\om(k_1+\ell_1,\cdots,k_s+\ell_s)
\prod_{j=1}^s\smchoose{k_j+\ell_j}{\ell_j}
\prod_{j=1}^s\ka_j^{k_j}\la_j^{\ell_j}\\
&=\sum_{ \atop{n_1,\cdots,n_s}{n_1+\cdots+n_s\ge p} }
\om(n_1,\cdots,n_s)
c_p(n_1,\cdots,n_s)
\end{align*}
where
\begin{align*}
\om(n_1,\cdots,n_s)
&=\max_{\atop{1\le j\le s}{n_j\ne 0}}
  \max\limits_{1\le i\le n_j}
   \sum_{\atop{\vec\bz_p\in X^{n_p}}{ {(\vec\bz_j)}_i=\bx} }
\big|a(\vec\bz_1,\cdots,\vec\bz_s)\big|
e^{\tau_d(\vec\bz_1,\cdots,\vec\bz_s)}
\end{align*}
and
\begin{align*}
c_p(n_1,\cdots,n_s)
&=\sum_{ \atop{ \atop{k_j, \ell_j\ge 0}{k_j+\ell_j=n_j} }
              {\ell_1+\cdots+\ell_s\ge p} }
\prod_{j=1}^s\smchoose{n_j}{\ell_j}\ka_j^{k_j}\la_j^{\ell_j}
\le \sfrac{1}{\si^p}\prod_{j=1}^s (\ka_j+\si \la_j)^{n_j}
\end{align*}
For the last inequality, apply the
binomial expansion to each $(\ka_j+\si \la_j)^{n_j}$
and compare the two sides of the inequality  term by term.
This proves part (a). Part (b) follows by Remark \ref{remSUBsubtofn}.
\end{proof}

\begin{proposition}\label{propSUBsubstitution}
Let $h(\ga_1,\cdots,\ga_r)$ be an analytic function on a 
neighbourhood of the origin in $\bbbc^{r|X|}$, and let $A_j$, $\de A_j$,
$1\le j\le r$ be $s$--field maps. Furthermore let  
$\la_1,\ \cdots,\ \la_r$ 
be constant weight factors and let  $w_\la$ be the weight system with 
metric $d$ that associates the weight factor $\la_j$ to the field $\ga_j$. 

\begin{enumerate}[label=(\alph*), leftmargin=*]
\item  
Set
\begin{equation*}
\tilde h(\al_1,\cdots,\al_s) 
= h\Big( A_1(\al_1,\cdots,\al_s),\cdots, A_r(\al_1,\cdots,\al_s)\Big)
\end{equation*}
Assume that
\begin{equation*}
\tn A_j\tn_w \le \la_j
\end{equation*}
for each $1\le j\le r$. Then
\begin{equation*}
\|\tilde h\|_w \le \| h\|_{w_\la}
\end{equation*}

\item
Assume that there is a $\si\ge 1$ such that
\begin{equation*} 
\tn A_j\tn_w+\si\tn \de A_j\tn_w \le \la_j
\end{equation*}
for all $1\le j\le r$.
Set
\begin{align*}
& \widetilde{\de h}(\al_1,\cdots,\al_s)  \\
&\hskip0.2in= h\Big(
        \!A_1(\al_1,\cdots,\al_s\!)\!+\!\de A_1(\al_1,\cdots,\al_s),\cdots, 
         A_r(\al_1,\cdots,\al_s)\!+\!\de A_r(\al_1,\cdots,\al_s\!)\!\Big)\cr
&\hskip1in -h\Big( A_1(\al_1,\cdots,\al_s),\cdots, A_r(\al_1,\cdots,\al_s)
\Big)
\end{align*}
More generally, if $p\in\bbbn$ and $\de h^{(\ge p)}$ is the part of
\begin{equation*}
\de h(\ga_1,\cdots,\ga_r;\de_1,\cdots,\de_r)
=h(\ga_1+\de_1,\cdots,\ga_r+\de_r)-h(\ga_1,\cdots,\ga_r)
\end{equation*}
that is of degree at least $p$ in $(\de_1,\cdots,\de_r)$, set
\begin{align*}
&\widetilde{\de h}^{(\ge p)}(\al_1,\cdots,\al_s)\cr
&\hskip0.5in= \de h^{(\ge p)}\Big(
     A_1(\al_1,\cdots,\al_s),\cdots, A_r(\al_1,\cdots,\al_s)\,;\, \\
\noalign{\vskip-0.1in} &\hskip3in
   \de A_1(\al_1,\cdots,\al_s),\cdots, \de A_r(\al_1,\cdots,\al_s)
\Big)
\end{align*}
 Then
\begin{equation*}
\|\widetilde{\de h}\|_w \le \sfrac{1}{\si} \| h\|_{w_\la}\qquad
\big\|\widetilde{\de h}^{(\ge p)}\big\|_w \le \sfrac{1}{\si^p} \| h\|_{w_\la}
\end{equation*}
\end{enumerate}
\end{proposition}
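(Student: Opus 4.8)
The plan is to reduce everything to the function-valued case by composing with a test field, exactly as in Remark \ref{remSUBsubtofn}, and then to exploit the monomial structure of the norm so that substitution becomes a matter of bounding products of kernels by products of their norms.

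\medskip

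\noindent\textbf{Part (a).}
First I would expand $h$ in its symmetric coefficient system,
\begin{equation*}
h(\ga_1,\cdots,\ga_r)=\sum_{(\vec\bz_1,\cdots,\vec\bz_r)\in\bX^{(r)}}
b(\vec\bz_1,\cdots,\vec\bz_r)\ \ga_1(\vec\bz_1)\cdots\ga_r(\vec\bz_r),
\end{equation*}
and substitute $\ga_j=A_j(\al_1,\cdots,\al_s)$. Writing each $A_j$ through its kernel, the composite $\tilde h$ acquires an explicit coefficient system obtained by, for each $j$ and each of the $n(\vec\bz_j)$ slots of $\vec\bz_j$, attaching a copy of the $A_j$ kernel and then summing over the internal vertices $\vec\bz_j$. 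The key point is that the weight $e^{\tau_d}$ is \emph{subadditive under gluing trees at a common vertex}: if a tree on the $A_j$-output vertices coincides with one of the $h$-input vertices $\bz$, the union of all these trees with the $h$-tree is a connected set whose minimal spanning tree has length at most the sum of the lengths. This is the same mechanism already used implicitly in \cite{CPC}; I would invoke it to see that $e^{\tau_d}$ of the composite is bounded by the product of the $e^{\tau_d}$ factors of $b$ and of the $A_j$ kernels. Then the sum over internal vertices factorizes: summing the $A_j$ kernel over its output vertex with weight $e^{\tau_d}$ and $\ka$-factors is exactly $\tn A_j\tn_w$ (this is where the $L$ and $R$ halves of Definition \ref{defSUBkrnel}(c) are needed — the $L$-part handles slots that are not the distinguished external leg, the $R$-part handles the one that is). Each output slot thus contributes a factor $\tn A_j\tn_w\le\la_j$, and collecting these over all $n(\vec\bz_j)$ slots of all $j$ reproduces precisely the monomial $\prod_j\la_j^{n(\vec\bz_j)}$ appearing in $\|h\|_{w_\la}$. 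Hence $\|\tilde h\|_w\le\|h\|_{w_\la}$. The cleanest way to organize this bookkeeping is to pass to $f_{\tilde h}(\be;\al_1,\cdots,\al_s)=\sum_\bx\be(\bx)\tilde h(\cdots)(\bx)$ and argue at the level of functions, so that the distinguished external leg is carried by $\be$ and the $L/R$ split of the field-map norm is automatic from \eqref{eqnSUBsimplenorm}.

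\medskip

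\noindent\textbf{Part (b).}
For the difference statement I would combine part (a) with Lemma \ref{lemSUBdiff}. Observe that
\begin{equation*}
\widetilde{\de h}^{(\ge p)}(\al_1,\cdots,\al_s)
=\de h^{(\ge p)}\big(A_1,\cdots,A_r;\de A_1,\cdots,\de A_r\big),
\end{equation*}
i.e. $\widetilde{\de h}^{(\ge p)}$ is the composition of the $2r$-variable function $\de h^{(\ge p)}$ with the $2r$ field maps $A_1,\cdots,A_r,\de A_1,\cdots,\de A_r$. Apply part (a) to this composition: the relevant weight system for $\de h^{(\ge p)}$ assigns some weight factor $\mu_j$ to each $\ga_j$ and $\nu_j$ to each $\de_j$, and the hypothesis needed is $\tn A_j\tn_w\le\mu_j$ and $\tn\de A_j\tn_w\le\nu_j$. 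So part (a) gives $\big\|\widetilde{\de h}^{(\ge p)}\big\|_w\le\|\de h^{(\ge p)}\|_{w_{\mu,\nu}}$. Now choose $\mu_j=\tn A_j\tn_w$ and $\nu_j=\si\tn\de A_j\tn_w$; then $\mu_j+\nu_j=\tn A_j\tn_w+\si\tn\de A_j\tn_w\le\la_j$. By the $\de f^{(\ge p)}$ half of Lemma \ref{lemSUBdiff}(a), applied with weight factors $\ka_j\rightsquigarrow\mu_j$ and $\la_j\rightsquigarrow\tn\de A_j\tn_w$ (so that $\ka_j+\si\la_j\rightsquigarrow\mu_j+\si\tn\de A_j\tn_w=\mu_j+\nu_j\le\la_j$), one gets
\begin{equation*}
\|\de h^{(\ge p)}\|_{w_{\mu,\nu}}\le\sfrac{1}{\si^p}\,\|h\|_{w_{\mu+\nu}}\le\sfrac{1}{\si^p}\,\|h\|_{w_\la},
\end{equation*}
the last step because $\mu_j+\nu_j\le\la_j$ and $\|h\|$ is monotone in the weight factors (larger weight factors only enlarge each monomial's contribution). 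Combining the two displays yields $\big\|\widetilde{\de h}^{(\ge p)}\big\|_w\le\sfrac{1}{\si^p}\|h\|_{w_\la}$, and $p=1$ gives the first claim since $\widetilde{\de h}=\widetilde{\de h}^{(\ge 1)}$.

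\medskip

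\noindent\textbf{Main obstacle.}
The routine-but-delicate heart of the argument is part (a): verifying that the tree-length weight behaves subadditively when the $A_j$-kernel trees are glued to the $h$-kernel tree, and that after this gluing the sum over internal vertices genuinely factorizes into the separate $\tn A_j\tn_w$ factors with the correct $L$ versus $R$ assignment — the distinguished external leg of $\tilde h$ can sit inside any one of the substituted $A_j$'s, and one must check that the ``max over which leg is external'' in the definition of $\|\tilde h\|_w$ is dominated by routing that choice through the $R$-norm of the appropriate $A_j$ while all other $A_j$'s contribute only their $L$-norm. Passing to $f_A$ as in Remark \ref{remSUBsubtofn} and working entirely with the function norm \eqref{eqnSUBsimplenorm} is what makes this bookkeeping manageable, and I expect the formal write-up to spend most of its length there; part (b) is then essentially a two-line corollary of (a) and Lemma \ref{lemSUBdiff}.
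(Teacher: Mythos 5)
Your strategy coincides with the paper's in both parts: part (a) is proved there by exactly the coefficient--system computation you sketch (subadditivity of $\tau_d$ under gluing the $A_j$--kernel trees to the $h$--kernel tree, followed by using the $L$--norm for every substituted kernel except the one containing the pinned external component, which is routed through the $R$--norm), and part (b) is deduced from part (a) together with Lemma \ref{lemSUBdiff}(a). Two corrections. First, a presentational one: in part (a) the composite $\tilde h$ is scalar--valued, so there is no output vertex to pair with a test field $\be$; the device of Remark \ref{remSUBsubtofn} is what reduces the field--map version (Corollary \ref{corSUBsubstitution}) to the proposition, not a tool for proving the proposition itself. Second, and more substantively, your choice $\nu_j=\si\tn \de A_j\tn_w$ in part (b) makes the displayed inequality $\|\de h^{(\ge p)}\|_{w_{\mu,\nu}}\le\sfrac{1}{\si^p}\|h\|_{w_{\mu+\nu}}$ false in general: Lemma \ref{lemSUBdiff}(a) applied with $\ka_j\rightarrow\mu_j$ and $\la_j\rightarrow\tn\de A_j\tn_w$ controls $\|\de h^{(\ge p)}\|$ in the weight system that assigns $\tn\de A_j\tn_w$, not $\si\tn\de A_j\tn_w$, to $\de_j$. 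Concretely, for $h(\ga)=\ga(\bx_0)^p$ and $\mu_j=0$ your inequality reads $\nu^p\le\si^{-p}\nu^p$, which fails for $\si>1$. The repair is immediate and is exactly what the paper does: take $\nu_j=\tn\de A_j\tn_w$, apply part (a) with the weight system assigning $\tn A_j\tn_w$ to $\ga_j$ and $\tn\de A_j\tn_w$ to $\de_j$, and then Lemma \ref{lemSUBdiff}(a) together with monotonicity of the norm in the weight factors and the hypothesis $\tn A_j\tn_w+\si\tn\de A_j\tn_w\le\la_j$ yields $\|\de h^{(\ge p)}\|_{w_\de}\le\sfrac{1}{\si^p}\|h\|_{w_\la}$, and hence the claim.
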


\begin{proof}
(a) 
Let $a(\vec \by_1,\cdots,\vec \by_r)$ be a symmetric 
coefficient system for $h$. Define, for each $n(\vec\bx_i)=n_i\ge 0$, 
$1\le i\le s$,
\begin{align*}
&\tilde a(\vec \bx_1,\cdots,\vec \bx_s) \cr
& =\hskip-8pt\sum_{m_1,\cdots,m_r\ge 0} 
     \sum_{ \atop{  \atop{n_{i,j,k}\ge 0\ {\rm for}} 
                         {1\le i\le s,\,1\le j\le r,\,1\le k\le m_j} }
                 { {\rm with}\ \Si_{j,k}n_{i,j,k}=n_i }  
           }
    \sum_{  \atop{\vec\by_1\in X^{m_1}}
                 { \atop{\svdots}{\vec\by_r\in X^{m_r}}
                 } 
          }\hskip-9pt 
a(\vec \by_1,\cdots,\vec \by_r)
\prod_{j=1}^r\Big[\smprod_{k=1}^{m_j}
      A_j({(\vec\by_j)}_k; \vec \bx_{1,j,k},\cdots,\vec\bx_{s,j,k})\Big] 
\end{align*}
where ${(\vec\by_j)}_k$ is the $k^{\rm th}$ component of $\vec\by_j$ and
the $\vec\bx_{ijk}$'s are determined by the conditions that 
$n(\vec\bx_{ijk})=n_{ijk}$ and 
\begin{equation} \label{eqnSUBvecxidecomp}
\vec\bx_i
 =\circ_{j,k}\vec\bx_{ijk}
 =\vec\bx_{i11}\circ\vec\bx_{i12}\circ\cdots\circ\vec\bx_{i1m_1}
     \circ\vec\bx_{i21}\circ\cdots\circ\vec\bx_{i2m_2}\circ\cdots\circ
     \vec\bx_{irm_r}
\end{equation} 
Then 
$\tilde a(\vec \bx_1,\cdots,\vec \bx_s)$ is a (not necessarily symmetric)
coefficient system for $\tilde h$. Since
\begin{align*}
&\tau_d\big(\supp(\vec\bx_1,\cdots,\vec\bx_s)\big) \\
&\hskip0.5in\le \tau_d\big(\supp(\vec\by_1,\cdots,\vec\by_s)\big)+
\sum_{  \atop{1\le j\le r}{\le k\le m_j} }
    \tau_d\big(\supp({(\vec\by_j)}_k,
              \vec \bx_{1,j,k},\cdots,\vec\bx_{s,j,k})\big)
\end{align*}
we have
\begin{equation}\label{eqnSUBmessa}
\begin{split}
&w(\vec\bx_1,\cdots,\vec\bx_s)
\,   \big|\tilde a(\vec \bx_1,\cdots,\vec \bx_s)\big|\\
&\hskip0.5in   
\le\sum_{m_1,\cdots,m_r\ge 0}\ 
 \sum_{ \atop{n_{i,j,k}\ge 0\ {\rm for}} 
             { \atop{1\le i\le s,\,1\le j\le r,\,1\le k\le m_j}
                    { {\rm with}\ \Si_{j,k}n_{i,j,k}=n_i} 
             }
      }\ 
 \sum_{ \atop{\vec\by_1\in X^{m_1}}
             { \atop{\svdots}{\vec\by_r\in X^{m_r}}  }
      }\hskip-6pt 
 w_\la(\vec\by_1,\cdots,\vec\by_r)
\big|a(\vec \by_1,\cdots,\vec \by_r)\big|\\
& \hskip 2.8in
\prod_{j=1}^r\Big[\smprod_{k=1}^{m_j}
       B_j({(\vec\by_j)}_k; \vec \bx_{1,j,k},\cdots,\vec\bx_{s,j,k}) \Big]
\end{split}
\end{equation}
where
\begin{equation*}
B_j(\by; \vec \bx_1',\cdots,\vec\bx_s')
=\sfrac{1}{\la_j}
   |A_j(\by; \vec \bx_1',\cdots,\vec\bx_s')|
         \ka_1^{n(\vec\bx_1')}\cdots \ka_s^{n(\vec\bx_s')} 
    e^{\tau_d(\supp(\by, \vec \bx_1',\cdots,\vec\bx_s'))}
\end{equation*}

We first observe that when $\vec\bx_1=\cdots=\vec\bx_s=-$, we have
$
\tilde a(-,\cdots,-)
= a(-,\cdots,-)
$
so that the corresponding contributions to $\|\tilde h\|_{w}$
and $\|h\|_{w_\la}$ are identical. Therefore we may assume, without loss of 
generality, that $h(0,\cdots,0)=0$.

We are to bound
\begin{equation*}
\|\tilde h\|_{w}
=\hskip-3pt\sum_{\atop{n_1,\cdots,n_{s}\ge 0}{n_1+\cdots+n_s\ge 1}}\hskip-2pt
\max_{\bx\in X} \max_{\atop{1\le \bar\jmath\le s}{n_{\bar\jmath}\ne 0}}
\max_{1\le\bar\imath\le n_{\bar\jmath}}
\sum_{\atop{(\vec\bx_1,\cdots,\vec\bx_s)\in X^{n_1}\times\cdots\times X^{n_s}}
           {{(\vec\bx_{\bar\jmath})}_{\bar\imath}=\bx} }
\hskip-30pt
 w(\vec\bx_1,\cdots,\vec\bx_s)\ 
\big|\tilde a(\vec\bx_1,\cdots,\vec\bx_s)\big|
\end{equation*}
First fix any $n_1,\ \cdots,\ n_s\ge 0$ with $n_1+\cdots+n_s\ge 1$. We claim that
\begin{equation}\label{eqnSUBmaxs}
\begin{split}
& \max_{\bx\in X}
  \max_{\atop{1\le \bar\jmath\le s}{n_{\bar\jmath}\ne 0}}
  \max_{1\le\bar\imath\le n_{\bar\jmath}}
  \sum_{\atop{ (\vec\bx_1,\cdots,\vec\bx_s)\in X^{n_1}\times\cdots\times X^{n_s} }
             {{(\vec\bx_{\bar\jmath})}_{\bar\imath}=\bx}}
\hskip-30pt
 w(\vec\bx_1,\cdots,\vec\bx_s)\ 
\big|\tilde a(\vec\bx_1,\cdots,\vec\bx_s)\big|\\
&\hskip0.5in\le \sum_{m_1,\cdots,m_r\ge 0} \|w_\la a\big\|_{m_1,\cdots,m_r}
 \sum_{  \atop{n_{i,j,k}\ge 0\ {\rm for}}
              {\atop{1\le i\le s,\,1\le j\le r,\,1\le k\le m_j}
                    {{\rm with}\ \Si_{j,k}n_{i,j,k}=n_i}}
       }
\prod_{\atop{1\le j\le r}{1\le k\le m_j}}\Big[
   \sfrac{1}{\la_j} \big\|A_j\big\|_{w;n_{1,j,k},\cdots,n_{s,j,k}} \Big]
\end{split}
\end{equation}
Here,
as in \cite[Definition \defCPCnorms]{CPC},
\begin{equation*}
\|b\|_{m_1,\cdots,m_r}
=\max_{\by\in X}
 \max_{\atop{1\le j\le r}{m_j\ne 0}}
 \max\limits_{1\le i\le m_j}
 \sum_{ \atop{ \atop{\vec\by_\ell\in X^{m_\ell}}{1\le \ell\le r} }
             {{(\vec\by_j)}_i=\by}
      }
\big|b(\vec\by_1,\cdots,\vec\by_r)\big|
\end{equation*}
To prove \eqref{eqnSUBmaxs}, fix any $\bx\in X$ and assume, without loss of 
generality that $n_1\ge 1$ and $\bar\jmath=\bar\imath=1$. By \eqref{eqnSUBmessa},
(the meaning of the $\hat\jmath,\hat k$ introduced after the ``$=$'' below
is explained immediately following this string of inequalities) 
\begin{align*}
&\sum_{\atop{(\vec\bx_1,\cdots,\vec\bx_s)\in X^{n_1}\times\cdots\times X^{n_s}}
            {{(\vec\bx_1)}_1=\bx}
      }
\hskip-30pt w(\vec\bx_1,\cdots,\vec\bx_s)\ 
\big|\tilde a(\vec\bx_1,\cdots,\vec\bx_s)\big|\\
&\le \hskip-10pt
 \sum_{ \atop{(\vec\bx_1,\cdots,\vec\bx_s)\in X^{n_1}\times\cdots\times X^{n_s}}
             {{(\vec\bx_1)}_1=\bx}
       } 
 \sum_{m_1,\cdots,m_r\ge 0}\ 
 \sum_{ \atop{ \atop{n_{i,j,k}\ge 0\ {\rm for}} 
                    {1\le i\le s,\,1\le j\le r,\,1\le k\le m_j} }
              {{\rm with}\ \Si_{j,k}n_{i,j,k}=n_i }
       }\ 
  \sum_{ \atop{\vec\by_1\in X^{m_1}}
              { \atop{\svdots}{\vec\by_r\in X^{m_r}}
              }
       }\hskip-6pt 
  w_\la(\vec\by_1,\cdots,\vec\by_r)\big|a(\vec \by_1,\cdots,\vec \by_r)\big|\\
& \hskip 3in
  \prod_{j=1}^r\Big[\smprod_{k=1}^{m_j}
       B_j({(\vec\by_j)}_k; \vec \bx_{1,j,k},\cdots,\vec\bx_{s,j,k}) \Big]
\displaybreak[0]\\
&= \sum_{m_1,\cdots,m_r\ge 0}
  \sum_{ \atop{n_{i,j,k}\ge 0\ {\rm for}} 
              { \atop{1\le i\le s,\,1\le j\le r,\,1\le k\le m_j}
                     {{\rm with}\ \Si_{j,k}n_{i,j,k}=n_i } }
        }\ 
  \sum_{  \atop{\vec\bx_{i,j,k}\in X^{n_{i,j,k}}\ {\rm for}} 
               {  \atop{1\le i\le s,\,1\le j\le r,\,1\le k\le m_j}
                       {{\rm with}\ {(\vec\bx_{1,\hat\jmath,\hat k})}_1=\bx }
               }
        }\ 
\sum_{  \atop{\vec\by_1\in X^{m_1}}
             { \atop{\svdots}
                    {\vec\by_r\in X^{m_r}}
             }
      }\hskip-6pt 
      w_\la(\vec\by_1,\cdots,\vec\by_r)\big|a(\vec \by_1,\cdots,\vec \by_r)\big|\cr
 \noalign{\vskip-0.05in} & \hskip 3in
\prod_{j=1}^r\Big[\smprod_{k=1}^{m_j}
       B_j({(\vec\by_j)}_k; \vec \bx_{1,j,k},\cdots,\vec\bx_{s,j,k}) \Big]
\displaybreak[0]\\
&\le 
\sum_{m_1,\cdots,m_r\ge 0}
 \sum_{  \atop{\atop{n_{i,j,k}\ge 0\ {\rm for}} 
                    {1\le i\le s,\,1\le j\le r,\,1\le k\le m_j} }
              {{\rm with}\ \Si_{j,k}n_{i,j,k}=n_i}
      }\ 
 \sum_{ \atop{\atop{\vec\bx_{i,\hat\jmath,\hat k}\in X^{n_{i,\hat\jmath,\hat k}}\ {\rm for}}
                   {1\le i\le s} }
             {{\rm with}\ {(\vec\bx_{1,\hat\jmath,\hat k})}_1=\bx}
      }\ 
\sum_{  \atop{\vec\by_1\in X^{m_1}}
             {\atop{\svdots} {\vec\by_r\in X^{m_r}}}
     }\hskip-6pt 
 w_\la(\vec\by_1,\cdots,\vec\by_r)
\big|a(\vec \by_1,\cdots,\vec \by_r)\big|\\
 \noalign{\vskip-0.05in} & \hskip1.2in
  B_{\hat\jmath}({(\vec\by_{\hat\jmath})}_{\hat k}; 
      \vec \bx_{1,\hat\jmath,\hat k},\cdots,\vec\bx_{s,\hat\jmath,\hat k}) 
\prod_{  \atop{\atop{1\le j\le r}{1\le k\le m_j}}
              {(j,k)\ne(\hat\jmath,\hat k)}
      }\Big[
    \sfrac{1}{\la_j} L\big(A_j;w;\big\{n_{i,j,k}\big\}_{1\le i\le s}\big) \Big]
\displaybreak[0]\\
&\le 
\sum_{m_1,\cdots,m_r\ge 0}
 \sum_{ \atop{\atop{n_{i,j,k}\ge 0\ {\rm for}} 
                   {1\le i\le s,\,1\le j\le r,\,1\le k\le m_j} }
             {{\rm with}\ \Si_{j,k}n_{i,j,k}=n_i}
      }\ 
 \sum_{  \atop{\atop{\vec\bx_{i,\hat\jmath,\hat k}\in X^{n_{i,\hat\jmath,\hat k}}\ {\rm for}}
                    {1\le i\le s} }
              {{\rm with}\ {(\vec\bx_{1,\hat\jmath,\hat k})}_1=\bx}
      }\ 
   \sum_{\by\in X} \|w_\la a\big\|_{m_1,\cdots,m_r}\\
 \noalign{\vskip-0.07in}& \hskip1.2in
     B_{\hat\jmath}(\by; \vec \bx_{1,\hat\jmath,\hat k},\cdots,
                                       \vec\bx_{s,\hat\jmath,\hat k}) 
    \prod_{ \atop{\atop{1\le j\le r} {1\le k\le m_j }}
                 {(j,k)\ne(\hat\jmath,\hat k)}
          }\Big[
    \sfrac{1}{\la_j} L\big(A_j;w;\big\{n_{i,j,k}\big\}_{1\le i\le s}\big) \Big]
\displaybreak[0]\\
&\le 
\sum_{m_1,\cdots,m_r\ge 0} \|w_\la a\big\|_{m_1,\cdots,m_r}
 \sum_{ \atop{\atop {n_{i,j,k}\ge 0\ {\rm for}} 
                    {1\le i\le s,\,1\le j\le r,\,1\le k\le m_j} }
             {{\rm with}\ \Si_{j,k}n_{i,j,k}=n_i}
      }\hskip-15pt
       \sfrac{1}{\la_{\hat\jmath}}R\big(A_{\hat\jmath};w;
               \big\{n_{i,\hat\jmath,\hat k}\big\}_{1\le i\le s}\big)\\
 \noalign{\vskip-0.05in}&\hskip2.5in
 \prod_{  \atop{\atop{1\le j\le r}{1\le k\le m_j} }
               {(j,k)\ne(\hat\jmath,\hat k)}
       }\Big[
   \sfrac{1}{\la_j} L\big(A_j;w;\big\{n_{i,j,k}\big\}_{1\le i\le s}\big) \Big]
\displaybreak[0]\\
&\le 
\sum_{m_1,\cdots,m_r\ge 0} \|w_\la a\big\|_{m_1,\cdots,m_r}
 \sum_{ \atop{\atop{n_{i,j,k}\ge 0\ {\rm for}} 
                   {1\le i\le s,\,1\le j\le r,\,1\le k\le m_j}}
             {{\rm with}\ \Si_{j,k}n_{i,j,k}=n_i}
       }
\prod_{ \atop{1\le j\le r}{1\le k\le m_j}}\Big[
   \sfrac{1}{\la_j} \big\|A_j\big\|_{w;n_{1,j,k},\cdots,n_{s,j,k}} \Big]
\end{align*}
Here, for each $\big\{n_{1,j,k}\big\}_{ \atop{1\le j\le r}{1\le k\le m_j} }$,
the pair $(\hat\jmath,\hat k)$ is the first $(j,k)$, using the lexicographical
ordering of \eqref{eqnSUBvecxidecomp}, for which $n_{1,j,k}\ne 0$.

\begin{center}
   \includegraphics{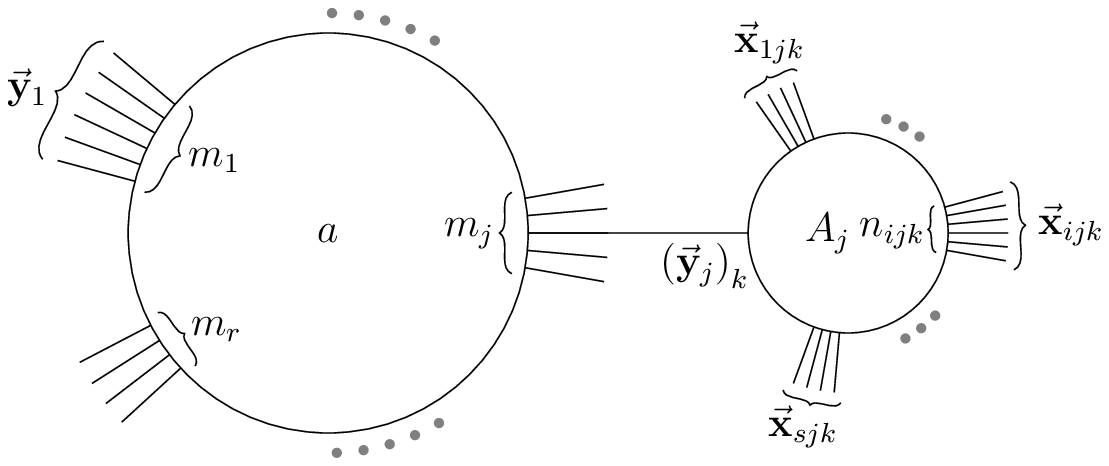}
\end{center}

Having completed the proof of \eqref{eqnSUBmaxs}, we now have,
recalling the hypothesis that each $\tn A_j\tn_w\le\la_j$,
\begin{align*}
\|\tilde h\|_{w}
&\le\hskip-3pt\sum_{ \atop{n_1,\cdots,n_{s}\ge 0}{n_1+\cdots+n_s\ge 1}}
\sum_{m_1,\cdots,m_r\ge 0} \|w_\la a\big\|_{m_1,\cdots,m_r}\hskip-20pt
 \sum_{  \atop{\atop{n_{i,j,k}\ge 0\ {\rm for}} 
                    {1\le i\le s,\,1\le j\le r,\,1\le k\le m_j}}
              {{\rm with}\ \Si_{j,k}n_{i,j,k}=n_i}
      }
\prod_{  \atop{1\le j\le r}{1\le k\le m_j}}\Big[
   \sfrac{1}{\la_j} \big\|A_j\big\|_{w;n_{1,j,k},\cdots,n_{s,j,k}} \Big]\cr
&\le
\sum_{m_1,\cdots,m_r\ge 0} \|w_\la a\big\|_{m_1,\cdots,m_r}\hskip-10pt
 \sum_{ \atop{n_{i,j,k}\ge 0\ {\rm for}}
             {1\le i\le s,\,1\le j\le r,\,1\le k\le m_j}
      }
\prod_{{1\le j\le r\atop 1\le k\le m_j}}\Big[
   \sfrac{1}{\la_j} \big\|A_j\big\|_{w;n_{1,j,k},\cdots,n_{s,j,k}} \Big]\cr
&\le
\sum_{m_1,\cdots,m_r\ge 0} \|w_\la a\big\|_{m_1,\cdots,m_r}\cr
&=\|h\|_{w_\la}
\end{align*}

\Item (b)
Let $w_\de$ be the weight system with metric $d$ that associates 
the weight factor $\tn A_j\tn_w$ to $\ga_j$ and the 
weight factor $\tn \de A_j\tn_w$ to $\de_j$. By part (a) of Lemma
\ref{lemSUBdiff}, with 
\begin{align*}
&f\rightarrow h  \\
&s\rightarrow r  \\
&\al_j\text{ with weight }\ka_j\rightarrow 
         \ga_j\text{ with weight }\tn A_j\tn_w \\
&\de_j\text{ with weight }\la_j\rightarrow 
         \de_j\text{ with weight }\tn\de A_j\tn_w 
\end{align*}
we have
\begin{equation*}
\|\de h\|_{w_\de}\le\sfrac{1}{\si}\|h\|_{w_\la}\qquad
\|\de h^{(\ge p)}\|_{w_\de}\le\sfrac{1}{\si^p}\|h\|_{w_\la}
\end{equation*}
Now $\widetilde{\de h}$ and $\widetilde{\de h}^{(\ge p)}$ are obtained 
from $\de h$ and $\de h^{(\ge p)}$, respectively, by the substitutions
\begin{equation*}
\ga_j= A_j(\al_1,\cdots,\al_s)\qquad 
\de_j= \de A_j(\al_1,\cdots,\al_s)
\end{equation*}
and the statement follows by part (a).
\end{proof}

\begin{corollary}\label{corSUBsubstitution}
Let $B$ be an $r$--field map and let $A_j$, $1\le j\le r$, 
be $s$--field maps. Define the $s$--field map $\tilde B$ by
\begin{equation*}
\tilde B(\al_1,\cdots,\al_s) 
= B\Big( A_1(\al_1,\cdots,\al_s),\cdots, A_r(\al_1,\cdots,\al_s)\Big)
\end{equation*}
Furthermore let  $\la_1,\ \cdots,\ \la_r$ be constant weight factors and
let  $w_\la$ be the weight system with metric $d$ that associates 
the weight factor $\la_j$ to the $j^{\rm th}$ field of $B$. Assume that
\begin{equation*}
\tn A_j\tn_w \le \la_j
\end{equation*}
for each $1\le j\le r$. Then
\begin{equation*}
\tn\tilde B\tn_w \le \tn B\tn_{w_\la}
\end{equation*}
\end{corollary}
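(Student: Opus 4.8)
The plan is to reduce the corollary to the scalar case already established in Proposition~\ref{propSUBsubstitution}(a), using the correspondence between field map kernels and analytic functions recorded in Remark~\ref{remSUBsubtofn}. Introduce an auxiliary field $\be$ on $X$ and set, as in that remark, $f_B(\be;\ga_1,\cdots,\ga_r)=\sum_{\bx\in X}\be(\bx)\,B(\ga_1,\cdots,\ga_r)(\bx)$ and $f_{\tilde B}(\be;\al_1,\cdots,\al_s)=\sum_{\bx\in X}\be(\bx)\,\tilde B(\al_1,\cdots,\al_s)(\bx)$. The definition of $\tilde B$ gives at once $f_{\tilde B}(\be;\al_1,\cdots,\al_s)=f_B\big(\be;A_1(\al_1,\cdots,\al_s),\cdots,A_r(\al_1,\cdots,\al_s)\big)$, and Remark~\ref{remSUBsubtofn} supplies the two identities $\tn B\tn_{w_\la}=\|f_B\|_{\hat w_\la}$ and $\tn\tilde B\tn_w=\|f_{\tilde B}\|_{\hat w}$, where $\hat w_\la$ (resp.\ $\hat w$) is the weight system assigning $\la_j$ to $\ga_j$ (resp.\ $\ka_j$ to $\al_j$) and the weight factor $1$ to $\be$. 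So it suffices to show $\|f_{\tilde B}\|_{\hat w}\le\|f_B\|_{\hat w_\la}$.

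Next I would view $f_B$ as an analytic function of the $r+1$ fields $\ga_1,\cdots,\ga_r,\be$ and apply Proposition~\ref{propSUBsubstitution}(a) with $h\to f_B$, with $s$ replaced by $s+1$ (the extra input field being $\be$ itself), with $r$ replaced by $r+1$, with the $(s+1)$-field maps taken to be $A_1,\cdots,A_r$, now reinterpreted as field maps that do not depend on the last argument, together with $A_{r+1}\to E$, where $E(\al_1,\cdots,\al_s,\be)(\bx)=\be(\bx)$ is the projection onto the last field. Then the composition in Proposition~\ref{propSUBsubstitution}(a) is exactly $f_{\tilde B}$, the input weight system there is $\hat w$ (weight $\ka_j$ on $\al_j$, weight $1$ on $\be$), and we take $\la_{r+1}:=1$, so that the weight system $w_\la$ of the proposition becomes $\hat w_\la$.

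It then remains only to verify the hypotheses $\tn A_j\tn\le\la_j$ of Proposition~\ref{propSUBsubstitution}(a) in this enlarged setting. For $1\le j\le r$, the norm of $A_j$ viewed as an $(s+1)$-field map independent of $\be$, with the weight factor $1$ attached to $\be$, equals $\tn A_j\tn_w$ (the extra empty $\be$-slot contributes a factor $1$ and no tree length), which is $\le\la_j$ by assumption. For $E$, the only nonzero kernel entries are $E(\bx;-,\cdots,-,(\bx))=1$, so that $\tau_d(\bx,-,\cdots,-,(\bx))=0$ and, with weight factor $1$ on $\be$, one computes $\tn E\tn=1=\la_{r+1}$. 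Hence Proposition~\ref{propSUBsubstitution}(a) gives $\|f_{\tilde B}\|_{\hat w}\le\|f_B\|_{\hat w_\la}$, i.e.\ $\tn\tilde B\tn_w\le\tn B\tn_{w_\la}$, as claimed.

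There is no substantial obstacle here; the proof is essentially pure bookkeeping once the reduction is set up. The one place to be careful — the ``hard part'', such as it is — is the handling of the pass-through field $\be$: one must confirm both that trivially extending the $A_j$ to depend on $\be$ and attaching the weight factor $1$ to $\be$ leaves $\tn A_j\tn$ unchanged, and that the projection field map $E$ has norm exactly $1$, so that the normalization $\la_{r+1}=1$ forced by the ``weight factor $1$ to $\be$'' convention of Remark~\ref{remSUBsubtofn} is consistent with the hypothesis $\tn A_{r+1}\tn\le\la_{r+1}$ of Proposition~\ref{propSUBsubstitution}(a). (Alternatively, one could repeat the combinatorial estimate in the proof of Proposition~\ref{propSUBsubstitution}(a) directly for field map kernels, carrying the extra output index $\bx$ through; but the reduction above avoids that duplication.)
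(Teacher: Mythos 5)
Your proof is correct and follows exactly the route the paper takes: its entire proof of Corollary~\ref{corSUBsubstitution} is ``This follows from Proposition \ref{propSUBsubstitution} and Remark \ref{remSUBsubtofn}.'' You have simply made explicit the bookkeeping (the pass-through field $\be$, the projection map $E$ of norm $1$, and the invariance of $\tn A_j\tn_w$ under the trivial extension) that the authors leave to the reader, and those checks are all right.
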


\begin{proof}
This follows from Proposition \ref{propSUBsubstitution}
and Remark \ref{remSUBsubtofn}.
\end{proof}

\begin{definition}\label{defSUBcB}
Denote by $w_{\ka,\la}$ the weight system with metric $d$ that 
associates the constant weight factor $\ka_i$ to the field 
$\al_i$ and the constant weight factor $\la_j$ to the field $\ga_j$.
Let $B(\vec\al,\vec\ga)$ be an $(s+r)$--field map with 
$\tn B\tn_{w_{\ka,\la}}<\infty$.

\begin{enumerate}[label=(\alph*), leftmargin=*]
\item
Set, for each r--tuple of nonnegative integers
$n_{s+1}$, $\cdots$, $n_{s+r}$,
\begin{align*}
&B_{n_{s+1},\cdots,n_{s+r}}(\bx;\vec\bx_1,\cdots,\vec\bx_{s+r}) \\
&\hskip1in=\begin{cases}
        B(\bx;\vec\bx_1,\cdots,\vec\bx_{s+r})
             &\text{if $n(\vec\bx_{s+j})=n_{s+j}$ for all $1\le j\le r$}\\
        0    &\text{otherwise}
 \end{cases}
\end{align*}
Then 
\begin{equation*}
B=\sum_{n_{s+1},\cdots,n_{s+r}\ge 0} \!\! B_{n_{s+1},\cdots,n_{s+r}}
\quad
\text{and}
\quad
\tn B\tn_{w_{\ka,\la}} = 
\sum_{n_{s+1},\cdots,n_{s+r}\ge 0} \!\!
         \tn B_{n_{s+1},\cdots,n_{s+r}} \tn_{w_{\ka,\la}}
\end{equation*}
$B$ is said to have minimum degree
at least $d_{\rm min}$ and maximum degree at most $d_{\rm max}\le\infty$ in its last
$r$ arguments if
\begin{equation*}
B_{n_{s+1},\cdots,n_{s+r}}=0\quad
\text{unless}\quad 
d_{\rm min}\le n_{s+1} + \cdots  + n_{s+r} \le d_{\rm max}
\end{equation*}
Set
\begin{equation*}
\tn B\tn'_{w_{\ka,\la}} = 
\sum_{n_{s+1},\cdots,n_{s+r}\ge 0}
         \big(n_{s+1} + \cdots  + n_{s+r}\big)
         \tn B_{n_{s+1},\cdots,n_{s+r}} \tn_{w_{\ka,\la}}
\end{equation*}
Think of $\tn B\tn'_{w_{\ka,\la}}$ as a bound on the derivative of
$B(\vec\al,\vec\ga)$ with respect to $\vec\ga$. See Lemma \ref{lemSUBdiffnorm}.

\item
Denote by $\cB$ the Banach space of all $r$--tuples $\vec\Ga=(\Ga_1,\cdots,\Ga_r)$
of $s$--field maps with the norm 
\begin{equation*}
\|\vec\Ga\|=\max_{1\le j\le r}\sfrac{1}{\la_j}\tn\Ga_j\tn_w
\end{equation*}
Also, for each $\rho>0$, denote by $\cB_\rho$, the closed ball in $\cB$ of radius
$\rho$.

\item
For each  $r$--tuple $\vec\Ga\in\cB_1$, we define  
the $s$--field map $\tilde B(\vec\Ga)$ by
\begin{equation*}
\big(\tilde B(\vec\Ga)\big)(\vec\al)
                  =B\big(\vec\al,\vec\Ga(\vec\al)\big)
\end{equation*}
\end{enumerate}
\end{definition}

\begin{remark}\label{remSUBcB}
Let  $B$ be an $(s+r)$--field map with  minimum degree
at least $d_{\rm min}$ and maximum degree at most $d_{\rm max}<\infty$ in its last
$r$ arguments. 

\begin{enumerate}[label=(\alph*), leftmargin=*]
\item  
$\ d_{\rm min} \tn B\tn_{w_{\ka,\la}}
               \le \tn B\tn'_{w_{\ka,\la}} 
               \le  d_{\rm max} \tn B\tn_{w_{\ka,\la}} $

\item 
If $d_{\rm min} = d_{\rm max}=1$,
$B$ is said to be linear. In this case, for any fixed $\al_1$, $\cdots$, 
$\al_s$, the map
\begin{equation*}
(\ga_1,\cdots,\ga_s)\mapsto 
B(\al_1,\cdots,\al_s,\ga_1,\cdots\ga_r)
\end{equation*}
is linear and $\tn B\tn'_{w_{\ka,\la}}= \tn B\tn_{w_{\ka,\la}} $
\end{enumerate}
\end{remark}

\begin{example}\label{egSUBcB}
A simple example with $s=0$ and $r=1$ is the truncated exponential
\begin{equation*}
B\big(\ga\big)(\bx) = E_n\big(a\ga(\bx)\big)
\qquad\text{where}\qquad
E_n(z)=\sum_{\ell=n}^\infty\sfrac{1}{\ell!} z^\ell 
\end{equation*}
and $a$ is a constant.
In this example, $B$ is a local function of $\ga$, so that all of the kernels
of $B$ are just delta functions. Hence
\begin{align*}
 \tn B\tn_{w_{\ka,\la}}
  & =  \sum_{\ell=n}^\infty\sfrac{1}{\ell!} a^\ell \la^\ell
     = E_n(a\la)
     \le \sfrac{a^n\la^n}{n!} e^{a\la} \cr
 \tn B\tn'_{w_{\ka,\la}}
  & =  \sum_{\ell=n}^\infty\sfrac{1}{(\ell-1)!} a^\ell \la^\ell
     = a\la E_{n-1} (a\la)
     \le \sfrac{a^n\la^n}{(n-1)!} e^{a\la} \cr
\end{align*}
\end{example}

\begin{lemma}\label{lemSUBdiffnorm}
Let $B$ be an $(s+r)$--field map with $\tn B\tn'_{w_{\ka,\la}}<\infty$.
Assume that $B$ has minimum degree at least $d_{\rm min}$
in its last $r$ arguments.
Then, for each $\vec\Ga,\vec\Ga'\in\cB_1$,
\begin{equation*}
\tn\tilde B(\vec\Ga)-\tilde B(\vec\Ga')\tn_w
\le \|\vec\Ga-\vec\Ga'\|\, 
    \max\big\{\|\vec\Ga\|\,,\,\|\vec\Ga'\| \big\}^{d_{\rm min}-1}
     \tn B\tn'_{w_{\ka,\la}}
\end{equation*}
\end{lemma}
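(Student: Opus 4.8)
The plan is to reduce the estimate to the substitution bound of Corollary~\ref{corSUBsubstitution}, after first expanding $B$ into its homogeneous pieces in the last $r$ fields and ``polarizing'' each piece so that the difference $\tilde B(\vec\Ga)-\tilde B(\vec\Ga')$ becomes accessible through a telescoping identity. Write $B=\sum_{\vec n}B_{\vec n}$ as in Definition~\ref{defSUBcB}(a), with $\vec n=(n_{s+1},\cdots,n_{s+r})$ and $N=n_{s+1}+\cdots+n_{s+r}$; since $B$ has minimum degree at least $d_{\rm min}$, only $\vec n$ with $N\ge d_{\rm min}$ contribute. The coefficient system of $B_{\vec n}$ is symmetric under permuting the components within each of the last $r$ slots, so the same kernel also defines an $(s+N)$--field map $\bar B_{\vec n}(\vec\al;\xi_1,\cdots,\xi_N)$ that is of degree exactly one in each new field $\xi_u$, that reduces to $B_{\vec n}(\vec\al,\vec\ga)$ when the first $n_{s+1}$ of the $\xi_u$'s are set equal to $\ga_1$, the next $n_{s+2}$ to $\ga_2$, and so on, and whose $u$-th new field carries the weight $\la_{j(u)}$, where $j(u)$ names the $\ga$--slot that the $u$-th copy came from. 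I would first verify that this polarization does not change the norm, i.e.\ $\tn\bar B_{\vec n}\tn_{\hat w}=\tn B_{\vec n}\tn_{w_{\ka,\la}}$ for the weight system $\hat w$ just described: by Definition~\ref{defSUBkrnel}(c) this is a term-by-term comparison of the $L$-- and $R$--parts for the two kernels, and the symmetry within each block makes the relevant sums agree.

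We may assume $\vec\Ga\ne\vec\Ga'$ (the statement being trivial otherwise), and put $\rho=\max\{\|\vec\Ga\|\,,\,\|\vec\Ga'\|\}$, so $0<\rho\le 1$ because $\vec\Ga,\vec\Ga'\in\cB_1$. Since $\tilde B_{\vec n}(\vec\Ga)(\vec\al)=\bar B_{\vec n}\big(\vec\al;\Ga_{j(1)}(\vec\al),\cdots,\Ga_{j(N)}(\vec\al)\big)$, and likewise for $\vec\Ga'$, the multilinearity of $\bar B_{\vec n}$ in $\xi_1,\cdots,\xi_N$ yields the telescoping identity of $s$--field maps
\begin{equation*}
\tilde B_{\vec n}(\vec\Ga)-\tilde B_{\vec n}(\vec\Ga')
=\sum_{t=1}^{N}\bar B_{\vec n}\big(\vec\al;\,\Ga'_{j(1)},\cdots,\Ga'_{j(t-1)},\;\Ga_{j(t)}-\Ga'_{j(t)}\,,\;\Ga_{j(t+1)},\cdots,\Ga_{j(N)}\big)
\end{equation*}
(all arguments being evaluated at $\vec\al$). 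To the $t$-th summand I would apply Corollary~\ref{corSUBsubstitution} to $\bar B_{\vec n}$, substituting the coordinate maps $\vec\al\mapsto\al_i$, $1\le i\le s$ (of norm $\ka_i$), into the first $s$ slots and $\Ga'_{j(1)},\cdots,\Ga_{j(t)}-\Ga'_{j(t)},\cdots,\Ga_{j(N)}$ into the last $N$ slots. Using $\tn\Ga_j\tn_w,\tn\Ga'_j\tn_w\le\rho\la_j$ and $\tn\Ga_j-\Ga'_j\tn_w\le\|\vec\Ga-\vec\Ga'\|\,\la_j$, Corollary~\ref{corSUBsubstitution} bounds that summand by $\tn\bar B_{\vec n}\tn_{w'}$, where $w'$ assigns $\ka_i$ to the first $s$ slots, $\rho\la_{j(u)}$ to the $u$-th of the remaining slots for $u\ne t$, and $\|\vec\Ga-\vec\Ga'\|\,\la_{j(t)}$ to the $t$-th. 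As $\bar B_{\vec n}$ is of degree exactly one in each of its last $N$ fields, rescaling those weights rescales the norm correspondingly, so $\tn\bar B_{\vec n}\tn_{w'}=\rho^{\,N-1}\|\vec\Ga-\vec\Ga'\|\,\tn\bar B_{\vec n}\tn_{\hat w}=\rho^{\,N-1}\|\vec\Ga-\vec\Ga'\|\,\tn B_{\vec n}\tn_{w_{\ka,\la}}$; summing the $N$ terms,
\begin{equation*}
\tn\tilde B_{\vec n}(\vec\Ga)-\tilde B_{\vec n}(\vec\Ga')\tn_w
\le N\,\rho^{\,N-1}\,\|\vec\Ga-\vec\Ga'\|\,\tn B_{\vec n}\tn_{w_{\ka,\la}}.
\end{equation*}

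Finally I would sum over $\vec n$. Since $\rho\le1$ and $N\ge d_{\rm min}$ we have $\rho^{\,N-1}\le\rho^{\,d_{\rm min}-1}$, and by Definition~\ref{defSUBcB}(a), $\sum_{\vec n}N\,\tn B_{\vec n}\tn_{w_{\ka,\la}}=\tn B\tn'_{w_{\ka,\la}}$; the subadditivity of $\tn\,\cdot\,\tn_w$ then gives
\begin{equation*}
\tn\tilde B(\vec\Ga)-\tilde B(\vec\Ga')\tn_w
\le\|\vec\Ga-\vec\Ga'\|\,\rho^{\,d_{\rm min}-1}\!\sum_{\vec n}N\,\tn B_{\vec n}\tn_{w_{\ka,\la}}
=\|\vec\Ga-\vec\Ga'\|\,\max\big\{\|\vec\Ga\|\,,\,\|\vec\Ga'\|\big\}^{d_{\rm min}-1}\,\tn B\tn'_{w_{\ka,\la}},
\end{equation*}
which is the assertion. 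The step requiring the most care is the polarization and the norm identity $\tn\bar B_{\vec n}\tn_{\hat w}=\tn B_{\vec n}\tn_{w_{\ka,\la}}$: one must check that replacing a field occurring to a fixed power $n_{s+j}$ by $n_{s+j}$ independent multilinear arguments of the same weight leaves both the $L$-- and the $R$--parts of the norm in Definition~\ref{defSUBkrnel}(c) unchanged, which is exactly where the symmetry of the coefficient system within each block is used. The telescoping identity, the bookkeeping of weights when invoking Corollary~\ref{corSUBsubstitution}, and the passage from $\rho^{\,N-1}$ to $\rho^{\,d_{\rm min}-1}$ are then routine. (This telescoping is the discrete analogue of expressing $\tilde B(\vec\Ga)-\tilde B(\vec\Ga')$ as the integral of a derivative of $\tilde B$, which is why the derivative-type norm $\tn B\tn'_{w_{\ka,\la}}$, rather than $\tn B\tn_{w_{\ka,\la}}$, appears.)
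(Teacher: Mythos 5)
Your proposal is correct and follows essentially the same route as the paper: decompose $B$ into its homogeneous pieces in the last $r$ arguments, reduce each piece to a multilinear (polarized) field map, telescope the difference, bound each term via Corollary~\ref{corSUBsubstitution}, and resum using $N\rho^{N-1}\le N\rho^{d_{\rm min}-1}$ together with $\sum_{\vec n}N\,\tn B_{\vec n}\tn_{w_{\ka,\la}}=\tn B\tn'_{w_{\ka,\la}}$. The only difference is cosmetic: you spell out the polarization and the norm identity $\tn\bar B_{\vec n}\tn_{\hat w}=\tn B_{\vec n}\tn_{w_{\ka,\la}}$ explicitly, where the paper compresses this into ``by renaming the $\ga$ fields and changing the value of $r$, we may assume $n_{s+1}=\cdots=n_{s+r}=1$.''
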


\begin{proof} 
Write
\begin{equation*}
B=\sum_{\atop{n_{s+1},\cdots,n_{s+r}\ge 0}
             {n_{s+1}+\cdots+n_{s+r}\ge d_{\rm min}}
        }
         B_{n_{s+1},\cdots,n_{s+r}}
\end{equation*}
as in Definition  \ref{defSUBcB}.
Since
\begin{align*}
\tn B\tn_{w_{\ka,\la}} &= \hskip-10pt
\sum_{\atop{n_{s+1},\cdots,n_{s+r}\ge 0}
           {n_{s+1}+\cdots+n_{s+r}\ge d_{\rm min}}
      }\hskip-20pt
         \tn B_{n_{s+1},\cdots,n_{s+r}} \tn_{w_{\ka,\la}}
\\
\tn B\tn'_{w_{\ka,\la}} &= \hskip-10pt
\sum_{ \atop{n_{s+1},\cdots,n_{s+r}\ge 0}
            {n_{s+1}+\cdots+n_{s+r}\ge d_{\rm min}}
      }\hskip-20pt
         \tn B_{n_{s+1},\cdots,n_{s+r}} \tn'_{w_{\ka,\la}}
\end{align*}
we may assume, without loss of generality, that at most one 
$B_{n_{s+1},\cdots,n_{s+r}}$ is nonvanishing. By renaming the $\ga$ 
fields and changing the value of $r$, we may assume that 
$n_{s+1} =\cdots = n_{s+r}=1$. Then 
$B\big(\vec\al,\ga_1,\cdots\ga_r)$ is multilinear in $\ga_1$, $\cdots$,
$\ga_r$ so that
\begin{align*}
&\tilde B(\vec\Ga)(\vec\al)-\tilde B(\vec\Ga')(\vec\al)
=B\big(\vec\al,\Ga_1(\vec\al),\cdots,\Ga_r(\vec\al)\big)
  -B\big(\vec\al,\Ga'_1(\vec\al),\cdots,\Ga'_r(\vec\al)\big)\\
&\hskip1in=\sum_{j=1}^r
  B\Big(\vec\al\,,\,\Ga_1(\vec\al),\cdots,\Ga_{j-1}(\vec\al),
               \Ga_j(\vec\al)-\Ga'_j(\vec\al),
               \Ga'_{j+1}(\vec\al),\cdots, \Ga'_r(\vec\al)\Big)
\end{align*}
So, by Corollary \ref{corSUBsubstitution},
\begin{align*}
\tn \tilde B(\vec\Ga)-\tilde B(\vec\Ga')\tn_w
&\le \sum_{j=1}^r
  \Big(\prod_{k=1}^{j-1}\sfrac{\tn \Ga_k\tn_w}{\la_k}\Big)
  \sfrac{\tn \Ga_j-\Ga'_j\tn_w}{\la_j}
  \Big(\prod_{k=j+1}^{r}\sfrac{\tn \Ga'_k\tn_w}{\la_k}\Big)
  \tn B\tn_{w_{\ka,\la}}\\
&\le r\ 
  \max\big\{\|\vec\Ga\|\,,\,\|\vec\Ga'\| \big\}^{r-1}\ 
  \|\vec\Ga-\vec\Ga'\|\ \tn B\tn_{w_{\ka,\la}}\\
&\le 
  \max\big\{\|\vec\Ga\|\,,\,\|\vec\Ga'\| \big\}^{r-1}\ 
  \|\vec\Ga-\vec\Ga'\|\ \tn B\tn'_{w_{\ka,\la}}
\end{align*}
The claim follows since $\max\big\{\|\vec\Ga\|\,,\,\|\vec\Ga'\| \big\}\le 1$
and $r\ge d_{\rm min}$.
\end{proof}

\begin{lemma}[Product Rule]\label{lemSUBprodRule}
Let $A(\vec\al,\vec\ga)$ and $B(\vec\al,\vec\ga)$ be $(s+r)$--field maps 
with $\tn A\tn'_{w_{\ka,\la}}, \tn B\tn'_{w_{\ka,\la}}<\infty$.
Define 
\begin{equation*}
C(\vec\al,\vec\ga)(\bx) = A(\vec\al,\vec\ga)(\bx)\ B(\vec\al,\vec\ga)(\bx)
\end{equation*}
Then
\begin{equation*}
\tn C\tn'_{w_{\ka,\la}}
 \le    \tn A\tn'_{w_{\ka,\la}}\tn B\tn_{w_{\ka,\la}}
        + \tn A\tn_{w_{\ka,\la}}\tn B\tn'_{w_{\ka,\la}}
\end{equation*}
\end{lemma}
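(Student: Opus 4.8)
The proof splits into two ingredients: a product estimate for field maps, and an elementary rearrangement accounting for the extra factor $n_{s+1}+\cdots+n_{s+r}$ built into the norm $\tn\,\cdot\,\tn'$. The product estimate reads: if $P$ and $Q$ are $(s+r)$--field maps and $(P\cdot Q)(\vec\al,\vec\ga)(\bx)=P(\vec\al,\vec\ga)(\bx)\,Q(\vec\al,\vec\ga)(\bx)$, then $\tn P\cdot Q\tn_w\le\tn P\tn_w\tn Q\tn_w$. Although this resembles submultiplicativity of the norm, it relies essentially on the product $P\cdot Q$ being \emph{local} in the output point $\bx$ (the analogous inequality for $\bbbc$--valued functions is false). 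To prove it, write $\vec\bz=(\vec\bz_1,\cdots,\vec\bz_{s+r})$, and for a multi--index $\vec k=(k_1,\cdots,k_{s+r})$ with $0\le k_\ell\le n(\vec\bz_\ell)$ let $\vec\bz^{\le\vec k}$ be the tuple whose $\ell$-th entry consists of the first $k_\ell$ components of $\vec\bz_\ell$ and $\vec\bz^{>\vec k}$ the complementary tuple. Then
\begin{equation*}
\widehat C(\bx;\vec\bz)=\sum_{\vec k}P\big(\bx;\vec\bz^{\le\vec k}\big)\,Q\big(\bx;\vec\bz^{>\vec k}\big)
\end{equation*}
is a (non--symmetric) kernel for $P\cdot Q$, because ``$\vec\bz_\ell$ together with its cut point $k_\ell$'' $\mapsto\vec\bz_\ell^{\le k_\ell}\circ\vec\bz_\ell^{>k_\ell}$ is a bijection and $\al_\ell(\vec\bz_\ell)=\al_\ell(\vec\bz_\ell^{\le k_\ell})\,\al_\ell(\vec\bz_\ell^{>k_\ell})$; and since any kernel representing a field map may be used for an upper bound on its norm (exactly as in the proof of Proposition \ref{propSUBsubstitution}), it suffices to estimate $\tn\widehat C\tn_w$.

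For that estimate, gluing at the common vertex $\bx$ a tree spanning $\{\bx\}\cup\vec\bz^{\le\vec k}$ to a tree spanning $\{\bx\}\cup\vec\bz^{>\vec k}$ produces a tree spanning $\{\bx\}\cup\vec\bz$, so $\tau_d(\bx,\vec\bz)\le\tau_d(\bx,\vec\bz^{\le\vec k})+\tau_d(\bx,\vec\bz^{>\vec k})$; moreover the weight factors $\ka_i^{n_i},\la_j^{n_{s+j}}$ split into a product of ``head'' and ``tail'' factors once each degree $n_\ell$ is written as $n_\ell'+n_\ell''$. Inserting these into the quantities $L$ and $R$ of Definition \ref{defSUBkrnel} and reorganizing the sum over $\vec k$ by the split degrees (abbreviating $(n_\ell)$ for $(n_1,\cdots,n_{s+r})$, all sums below running over decompositions $n_\ell=n_\ell'+n_\ell''$ into nonnegative integers) yields
\begin{equation*}
L\big(\widehat C;w;(n_\ell)\big)\le\sum_{n_\ell=n_\ell'+n_\ell''}L\big(P;w;(n_\ell')\big)\,L\big(Q;w;(n_\ell'')\big),
\end{equation*}
together with the same bound for $R(\widehat C;w;(n_\ell))$ in which each summand $L(P;w;(n_\ell'))L(Q;w;(n_\ell''))$ is replaced by $\max\big\{R(P;w;(n_\ell'))L(Q;w;(n_\ell''))\,,\,L(P;w;(n_\ell'))R(Q;w;(n_\ell''))\big\}$. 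Taking $\max\{L,R\}$ and summing over $(n_\ell)$ then gives $\tn\widehat C\tn_w\le\tn P\tn_w\tn Q\tn_w$, the terms with a split piece of degree $0$ dropping out because $P$ and $Q$ are field maps. The $R$--bound is the delicate part: for the data $\bx',j,i$ appearing in $R(\widehat C;w;\cdots)$, the constrained component ${(\vec\bz_j)}_i$ falls, for each splitting, into the head (when $i\le n_j'$) or into the tail (when $i>n_j'$); in the first case the corresponding contribution has the form $\sum_\bx f(\bx)g(\bx)$ with $f$ the $\bx'$--constrained inner sum for $P$ and $g$ the inner sum for $Q$, and is bounded by $\big(\sum_\bx f(\bx)\big)\big(\max_\bx g(\bx)\big)\le R(P;w;(n_\ell'))\,L(Q;w;(n_\ell''))$, and symmetrically in the second case, so that one always pairs an $R$--factor with an $L$--factor.

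Granting the product estimate, the remainder is bookkeeping. Write $\vec m=(m_{s+1},\cdots,m_{s+r})$ and decompose $A=\sum_{\vec m}A_{\vec m}$ and $B=\sum_{\vec n}B_{\vec n}$ as in Definition \ref{defSUBcB}, so that $A_{\vec m}$ has degree exactly $m_{s+j}$ in $\ga_j$ for each $j$, and likewise for $B$. Then $A_{\vec m}\cdot B_{\vec n}$ has degree exactly $m_{s+j}+n_{s+j}$ in $\ga_j$, so $C_{\vec p}=\sum_{\vec m+\vec n=\vec p}A_{\vec m}\cdot B_{\vec n}$, whence, by the product estimate, $\tn C_{\vec p}\tn_{w_{\ka,\la}}\le\sum_{\vec m+\vec n=\vec p}\tn A_{\vec m}\tn_{w_{\ka,\la}}\tn B_{\vec n}\tn_{w_{\ka,\la}}$. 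Writing $|\vec p|=p_{s+1}+\cdots+p_{s+r}$ and using $|\vec p|=|\vec m|+|\vec n|$ when $\vec p=\vec m+\vec n$,
\begin{align*}
\tn C\tn'_{w_{\ka,\la}}
  &=\sum_{\vec p}|\vec p|\,\tn C_{\vec p}\tn_{w_{\ka,\la}}\\
  &\le\sum_{\vec m,\vec n}\big(|\vec m|+|\vec n|\big)\,\tn A_{\vec m}\tn_{w_{\ka,\la}}\,\tn B_{\vec n}\tn_{w_{\ka,\la}},
\end{align*}
and splitting the factor $|\vec m|+|\vec n|$ and resumming the two halves separately turns the right-hand side into $\tn A\tn'_{w_{\ka,\la}}\tn B\tn_{w_{\ka,\la}}+\tn A\tn_{w_{\ka,\la}}\tn B\tn'_{w_{\ka,\la}}$, which is the assertion (if one of the norms on the right is infinite there is nothing to prove). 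I expect the product estimate, and within it the handling of the $R$--norm, to be the only genuine obstacle; the rearrangement is purely formal.
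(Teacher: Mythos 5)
Your proof is correct and follows essentially the same route as the paper's: decompose $A=\sum_{\vec n}A_{\vec n}$ and $B=\sum_{\vec m}B_{\vec m}$ by degree in $\vec\ga$, use $|\vec N|=|\vec n|+|\vec m|$, and reduce to the submultiplicative estimate $\tn A_{\vec n}B_{\vec m}\tn_{w_{\ka,\la}}\le\tn A_{\vec n}\tn_{w_{\ka,\la}}\tn B_{\vec m}\tn_{w_{\ka,\la}}$ for pointwise products. The only difference is that the paper simply asserts that product estimate, whereas you prove it (correctly) via the cut-point kernel for the pointwise product, the tree-gluing inequality at the common output point $\bx$, and the pairing of one $R$--factor with one $L$--factor.
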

\begin{proof} 
For convenience of notation, write $\vec n=(n_{s+1},\cdots,n_{s+r})$,
$|\vec n| = n_{s+1} + \cdots  + n_{s+r}$
and $\vec n\ge 0$ for $n_{s+1},\cdots,n_{s+r}\ge 0$. Then, in
the notation of Definition \ref{defSUBcB}.a,
\begin{equation*}
C=\sum_{\vec N\ge 0} C_{\vec N}\qquad\hbox{with}\qquad
C_{\vec N} = \sum_{\ \atop{vec n,\vec m\ge 0}{\vec n+\vec m=\vec N}}
                    A_{\vec n} B_{\vec m}
\end{equation*}
and
\begin{align*}
\tn C\tn'_{w_{\ka,\la}}
&= \sum_{\vec N\ge 0} |\vec N|\, \tn C_{\vec N}\tn_{w_{\ka,\la}} \\
&\le \sum_{\vec n,\vec m\ge 0} \big(|\vec n| + |\vec m|\big)\, 
          \tn A_{\vec n} B_{\vec m} \tn_{w_{\ka,\la}}
\end{align*}
So the claim follows from
\begin{equation*}
\tn A_{\vec n} B_{\vec m} \tn_{w_{\ka,\la}}
   \le \tn A_{\vec n} \tn_{w_{\ka,\la}}
       \tn B_{\vec m} \tn_{w_{\ka,\la}}
\end{equation*}
\end{proof}

\newpage
\section{Solving Equations}\label{secSUBsolve}
In this section we consider  systems of $r\ge 1$ implicit equations of the form
\refstepcounter{equation}\label{eqnSUBfixedpteqn}
\begin{equation}
\ga_j=f_j(\vec\al)+L_j(\vec\al,\vec\ga)+B_j\big(\vec\al,\vec\ga\big)
\tag{\ref{eqnSUBfixedpteqn}.a}
\end{equation}
for  ``unknown''  fields $\ga_1,\cdots,\ga_r$ as a function of fields $\al_1,\cdots,\al_s$.
In the above equation, $\vec\al=\big(\al_1,\cdots,\al_s\big)$, $\vec\ga=\big(\ga_1,\cdots,\ga_r\big)$, and
for each $1\le j\le r$,
\begin{itemize}[leftmargin=*, topsep=2pt, itemsep=0pt, parsep=0pt]
\item
   $f_j$ is an $s$--field map,

\item 
    $L_j$ is an $(s+r)$--field map that is linear in
    its last $r$ arguments, and

\item
     $B_j$ is an $(s+r)$--field map.
\end{itemize}
We write the system  (\ref{eqnSUBfixedpteqn}.a) in the shorthand notation
\begin{equation}
\vec \ga
=\vec f(\vec\al)+\vec L(\vec\al,\vec\ga)+\vec B\big(\vec\al,\vec\ga\big)
\tag{\ref{eqnSUBfixedpteqn}.b}
\end{equation}

Example \ref{exSUBbackgroundfield}, below, is of this form and is a 
simplified version of the kind of equations that occur
as equations for ``background fields'' and ``critical fields'' in 
\cite{PAR1,PAR2}.
The following proposition gives conditions under which this system of 
equations has a solution
 $\vec \ga = \vec\Ga(\vec\al)$,
estimates on the solution, and a uniqueness statement.

\begin{proposition}\label{propSUBeqnsoln}
Let $\ka_1$, $\cdots$, $\ka_s$ and $\la_1$, $\cdots$, $\la_r$ 
 be constant weight factors for the fields $\al_1,\cdots,\al_s$
and $\ga_1,\cdots,\ga_r$, respectively.
As in  Definition \ref{defSUBcB} set $\cB_1=\set{\vec\Ga}{\|\vec\Ga\|\le 1}$ where 
$
\|\vec\Ga\|=\max\limits_{1\le j\le r}\sfrac{1}{\la_j}\tn\Ga_j\tn_{w_\ka}
$. 
Let  $0<\fc<1$ be a contraction factor.

\noindent
Assume that, for each $1\le j\le r$,
the $(s+r)$--field map $B_j(\vec\al;\vec\ga)$ has
minimum degree at least $2$  in its last
$r$ arguments (that is, in $\vec\ga$). Also  assume that for $1\le j \le r$
\begin{align*}
\tn f_j\tn_{w_\ka}+\tn L_j\tn_{w_{\ka,\la}}+\tn B_j\tn_{w_{\ka,\la}}&\le\la_j\\
 \tn L_j\tn_{w_{\ka,\la}}+\tn B_j\tn'_{w_{\ka,\la}}&\le \fc\la_j
\end{align*}
\begin{enumerate}[label=(\alph*), leftmargin=*]
\item
Then there is a unique $\vec\Ga\in\cB_1$ for which
\begin{equation*} 
\vec\Ga(\vec\al)
=\vec f(\vec\al)+\vec L\big(\vec\al,\vec\Ga(\vec\al)\big)
   +\vec B\big(\vec\al,\vec\Ga(\vec\al)\big)
\end{equation*}
That is, which solves \eqref{eqnSUBfixedpteqn}. Furthermore
\begin{equation*}
\max_{j}\sfrac{1}{\la_j}\tn\Ga_j\tn_w
         \le\sfrac{1}{1-\fc}\max_{j}\sfrac{1}{\la_j}\tn f_j\tn_w\qquad
\max_{j}\sfrac{1}{\la_j}\tn\Ga_j-f_j\tn_w
         \le\sfrac{\fc}{1-\fc}\max_{j}\sfrac{1}{\la_j}\tn f_j\tn_w
\end{equation*}

\item
Assume, in addition, that
\begin{equation*}
\tn f_j\tn_w\le (1-\fc)^2\,\la_j\qquad\hbox{for all }1\le j\le r
\end{equation*}
Denote by $\vec\Ga$ the solution of part (a) and by
 $\vec\Ga^{(1)}$ the unique element of $\cB_1$ that solves
$\ga_j=f_j(\vec\al)+L_j(\vec\al,\vec\ga)$ for $1\le j\le r$. Then
\begin{equation*}
\|\vec\Ga^{(1)}\|\le\sfrac{1}{1-\fc}\|\vec f\,\|\qquad
\|\vec\Ga^{(1)}-\vec f\|\le\sfrac{\fc}{1-\fc}\|\vec f\,\|
\end{equation*}
and
\begin{equation*}
\|\vec\Ga-\vec\Ga^{(1)}\|
      \le \sfrac{\|\vec f\,\|^2}{(1-\fc)^3} \max_{1\le j\le r}\sfrac{1}{\la_j}\tn B_j\tn_{w_{\ka,\la}}
      \le  \max_{1\le j\le r}\sfrac{1}{\la_j}\tn B_j\tn_{w_{\ka,\la}}
\end{equation*}
\end{enumerate}
\end{proposition}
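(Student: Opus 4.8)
The plan is to set up both parts as applications of the Banach fixed point theorem on the complete metric space $\cB_1$, using the substitution machinery of Section \ref{secSUBsub} and the Lipschitz estimate of Lemma \ref{lemSUBdiffnorm}. First I would define the map $\Phi:\cB_1\to\cB$ by $\big(\Phi(\vec\Ga)\big)_j(\vec\al)=f_j(\vec\al)+L_j\big(\vec\al,\vec\Ga(\vec\al)\big)+B_j\big(\vec\al,\vec\Ga(\vec\al)\big)$; this is well-defined by Corollary \ref{corSUBsubstitution}, since for $\vec\Ga\in\cB_1$ we have $\tn\Ga_j\tn_w\le\la_j$, so substituting into $L_j$ and $B_j$ (viewed as $(s+r)$-field maps with weights $\ka_i$ on $\al_i$ and $\la_j$ on $\ga_j$) is controlled by $\tn L_j\tn_{w_{\ka,\la}}$ and $\tn B_j\tn_{w_{\ka,\la}}$ respectively. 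To show $\Phi(\cB_1)\subseteq\cB_1$, I would estimate $\tfrac{1}{\la_j}\tn\big(\Phi(\vec\Ga)\big)_j\tn_w\le\tfrac{1}{\la_j}\big(\tn f_j\tn_w+\tn L_j\tn_{w_{\ka,\la}}+\tn B_j\tn_{w_{\ka,\la}}\big)\le 1$ by the first hypothesis. For the contraction property I would apply Lemma \ref{lemSUBdiffnorm} separately to $\vec L$ (with $d_{\min}=1$) and to $\vec B$ (with $d_{\min}=2$), using that on $\cB_1$ one has $\max\{\|\vec\Ga\|,\|\vec\Ga'\|\}\le 1$, to get $\|\Phi(\vec\Ga)-\Phi(\vec\Ga')\|\le\|\vec\Ga-\vec\Ga'\|\,\max_j\tfrac{1}{\la_j}\big(\tn L_j\tn_{w_{\ka,\la}}+\tn B_j\tn'_{w_{\ka,\la}}\big)\le\fc\,\|\vec\Ga-\vec\Ga'\|$ by the second hypothesis; this gives existence and uniqueness of $\vec\Ga=\vec\Ga^{(\ge 0)}$ in $\cB_1$.

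For the quantitative bounds in part (a), I would iterate from the starting point $\vec 0$: writing $\delta=\|\Phi(\vec 0)\|=\|\vec f\,\|$ (since $B_j$ has minimum degree $2$, $B_j(\vec\al,0)=0$, and $L_j$ is linear so $L_j(\vec\al,0)=0$), the standard fixed-point telescoping gives $\|\vec\Ga\|\le\tfrac{1}{1-\fc}\|\vec f\,\|$. For the second estimate I would instead observe $\vec\Ga_j-f_j=L_j(\vec\al,\vec\Ga)+B_j(\vec\al,\vec\Ga)$, apply Lemma \ref{lemSUBdiffnorm} to the difference of this against its value at $\vec\Ga=0$ (which vanishes), obtaining $\tfrac{1}{\la_j}\tn\Ga_j-f_j\tn_w\le\fc\|\vec\Ga\|\le\tfrac{\fc}{1-\fc}\|\vec f\,\|$. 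One should do the estimates coordinate-wise in $j$ and then take $\max_j$, as the statement is phrased, but since all the hypotheses are uniform in $j$ this is bookkeeping.

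For part (b), I would first note that $\vec\Ga^{(1)}$ is the fixed point of the restricted map $\Phi^{(1)}(\vec\Ga)_j=f_j+L_j(\vec\al,\vec\Ga)$, which is a contraction with factor $\le\fc$ on $\cB_1$ by the same argument (dropping the $B_j$ terms), so its bounds follow exactly as in part (a). For the difference $\vec\Ga-\vec\Ga^{(1)}$, I would use $\big(\Ga-\Ga^{(1)}\big)_j=L_j\big(\vec\al,\vec\Ga-\vec\Ga^{(1)}\big)+B_j(\vec\al,\vec\Ga)$: the linear term is again a $\fc$-contraction, so moving it to the left gives $\|\vec\Ga-\vec\Ga^{(1)}\|\le\tfrac{1}{1-\fc}\max_j\tfrac{1}{\la_j}\tn B_j(\vec\al,\vec\Ga)\tn_w$. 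Then by Corollary \ref{corSUBsubstitution} applied with the $\ga$-weights rescaled down to $\tn\Ga_k\tn_w\le\la_k\|\vec\Ga\|$, together with $B_j$ having minimum degree $2$ in $\vec\ga$ (which produces a factor $\|\vec\Ga\|^2$), one gets $\tfrac{1}{\la_j}\tn B_j(\vec\al,\vec\Ga)\tn_w\le\|\vec\Ga\|^2\,\tfrac{1}{\la_j}\tn B_j\tn_{w_{\ka,\la}}$; inserting $\|\vec\Ga\|\le\tfrac{1}{1-\fc}\|\vec f\,\|$ yields the claimed $\tfrac{\|\vec f\,\|^2}{(1-\fc)^3}$, and the final inequality is just $\tfrac{\|\vec f\,\|^2}{(1-\fc)^3}\le 1$, which follows from the extra hypothesis $\tn f_j\tn_w\le(1-\fc)^2\la_j$, i.e. $\|\vec f\,\|\le(1-\fc)^2$, so $\tfrac{\|\vec f\,\|^2}{(1-\fc)^3}\le\tfrac{(1-\fc)^4}{(1-\fc)^3}=1-\fc\le 1$. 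The main obstacle I anticipate is getting the homogeneity factor $\|\vec\Ga\|^2$ out of $B_j$ cleanly: one must carefully split $B_j$ into its pieces $B_{j;n_{s+1},\dots,n_{s+r}}$ of fixed total $\ga$-degree $n\ge 2$, apply Corollary \ref{corSUBsubstitution} to each with the shrunk weights $\la_k\|\vec\Ga\|$, pick up $\|\vec\Ga\|^n\le\|\vec\Ga\|^2$ since $\|\vec\Ga\|\le 1$, and then resum — this is exactly the computation already packaged in Lemma \ref{lemSUBdiffnorm}'s proof, so it may be cleanest to invoke a minor variant of that lemma rather than redo it.
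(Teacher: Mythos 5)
Your proposal is correct and follows essentially the same route as the paper: part (a) is the contraction mapping theorem on $\cB_1$, with Corollary \ref{corSUBsubstitution} giving $\vec F(\cB_1)\subseteq\cB_1$ and Lemma \ref{lemSUBdiffnorm} (plus Remark \ref{remSUBcB}) giving the Lipschitz constant $\fc$, and the quantitative bounds are the standard fixed-point estimates starting from $\vec 0$. In part (b) the paper runs a second contraction argument for $\vec\de=\vec G(\vec\de)$ with $G_j(\vec\de)=\tilde L_j(\vec\de)+\tilde B_j(\vec\Ga^{(1)}+\vec\de)$ and bounds $\|\de\vec\Ga\|\le\frac{1}{1-\fc}\|\vec G(\vec 0)\|$ with $\vec G(\vec 0)=\tilde B(\vec\Ga^{(1)})$, while you directly absorb the linear term from the identity $\de\Ga_j=\tilde L_j(\de\vec\Ga)+\tilde B_j(\vec\Ga)$; since $\|\vec\Ga\|$ and $\|\vec\Ga^{(1)}\|$ obey the same bound $\frac{\|\vec f\,\|}{1-\fc}$, this is an inessential variation yielding the identical estimate.
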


\begin{proof}
(a)
 Define $F(\vec\Ga)$ by
\begin{equation*}
\vec F(\vec\Ga)=\left[\begin{matrix}
                           f_1+\tilde L_1(\vec\Ga)+\tilde B_1(\vec\Ga)\\
                           \vdots\\
                           \noalign{\vskip0.05in}
                         f_r+\tilde L_r(\vec\Ga)+\tilde B_r(\vec\Ga)  
                      \end{matrix}\right]
\end{equation*}
Recall, from Definition \ref{defSUBcB}, that 
\begin{equation*}
   \big(\tilde L_j(\vec\Ga)\big)(\vec\al)
                  =L_j\big(\vec\al,\vec\Ga(\vec\al)\big)
   \quad\text{and}\quad
    \big(\tilde B_j(\vec\Ga)\big)(\vec\al)
                  =B_j\big(\vec\al,\vec\Ga(\vec\al)\big)
\end{equation*}
By Corollary \ref{corSUBsubstitution} and the hypothesis
$\tn f_j\tn_{w_\ka}+\tn L_j\tn_{w_{\ka,\la}}+\tn B_j\tn_{w_{\ka,\la}}\le\la_j$, 
$\vec F$ maps $\cB_1$ into $\cB_1$.
By Lemma \ref{lemSUBdiffnorm} and Remark \ref{remSUBcB}.b,
$\|\vec F(\vec\Ga)-\vec F(\vec\Ga')\|\le\fc\|\vec\Ga-\vec\Ga'\|$
so that $\vec F$ is a strict contraction.
The claims are now a consequence of the contraction mapping theorem. 

\Item (b)
The first two bounds are special cases of part (a) with $B_j=0$.
Since $L_j$ is linear in its last $r$ arguments, $\de\vec\Ga=\vec\Ga-
\vec\Ga^{(1)}$ obeys
\begin{equation*}
\de \Ga_j(\vec\al)
=L_j\big(\vec\al\,,\,\de \vec\Ga(\vec\al)\big)
   +B_j\big(\vec\al\,,\,\vec\Ga^{(1)}(\vec\al)+\de\vec\Ga(\vec\al)\big)
\end{equation*}
for $1\le j\le r$. View this a fixed point equation determining 
$\vec\de\Ga$. The equation is of the form $\vec\de=\vec G(\vec\de)$
where
\begin{equation*}
\vec G(\vec\de)=\left[\begin{matrix}
              \tilde L_1(\vec\de) +\tilde B_1(\vec\Ga^{(1)}+\vec\de)\\
                           \vdots\\
                           \noalign{\vskip0.05in}
             \tilde L_r(\vec\de)+\tilde B_r(\vec\Ga^{(1)}+\vec\de)  
                      \end{matrix}\right]
\end{equation*}
If $\|\vec\de\|\le\fc$ then  $\|\vec\Ga^{(1)}+\vec\de\|\le 1$.
Therefore, by Corollary \ref{corSUBsubstitution}, 
$\vec G$ maps $\cB_\fc$ into $\cB_\fc$.
By Lemma \ref{lemSUBdiffnorm}, $\vec G$ is a strict contraction.
Apply the contraction mapping theorem. Since
$G_j(\vec 0)=\tilde B_j(\vec\Ga^{(1)})$ and 
\begin{equation*}
\|\vec\Ga^{(1)}\|\le\sfrac{1}{1-\fc}\|\vec f\,\|
\implies \tn \Ga^{(1)}_j\tn_w\le\sfrac{\|\vec f\,\|}{1-\fc}\la_j
\end{equation*}
for each $1\le j\le r$ and $B_j$ is of degree at least two in
its last $r$ arguments we have 
$\tn \tilde B_j(\vec\Ga^{(1)})\tn_w
     \le{\big(\sfrac{\|\vec f\|}{1-\fc}\big)}^2\tn B_j\tn_{w_{\ka,\la}}$
so that
$\|\vec G(\vec 0)\|\le{\big(\sfrac{\|\vec f\|}{1-\fc}\big)}^2
\max\limits_{1\le j\le r}\sfrac{1}{\la_j}\tn B_j\tn_{w_{\ka,\la}}$.
Therefore the fixed point $\vec\de=\de\vec\Ga$ obeys
\begin{equation*}
\|\de\vec\Ga\| \le \sfrac{1}{1-\fc} \|\vec G(\vec 0)\|
      \le \sfrac{\|\vec f\|^2}{(1-\fc)^3} \max_{1\le j\le r}\sfrac{1}{\la_j}\tn B_j\tn_{w_{\ka,\la}}
       \le(1-\fc) \max\limits_{1\le j\le r}\sfrac{1}{\la_j}\tn B_j\tn_{w_{\ka,\la}}
\end{equation*}
\end{proof}

\begin{example}\label{exSUBbackgroundfield}
We  assume that $X$ is a finite lattice of the form
$X = \fL_1/ \fL_2$, where $\fL_1$ is a lattice in $\bbbr^d$ and 
$\fL_2$ is a sublattice of $\fL_1$ of finite index. The Euclidean 
distance on $\bbbr^d$ induces a distance $|\,\cdot\,|$ on $X$.

\noindent
Let $W_1,W_2:X^3\rightarrow\bbbc$ and set, for complex fields $\phi_1,\phi_2$ on $X$
\begin{align*}
\cW_1(\phi_1,\phi_2)(x)= \sum_{y,z\in X}W_1(x,y,z)\,\phi_1(y)\,\phi_2(z) \\
\cW_2(\phi_1,\phi_2)(x)= \sum_{y,z\in X}W_2(x,y,z)\,\phi_1(y)\,\phi_2(z)
\end{align*}
Aso let $S_1$ and $S_2$ be two invertible operators on $L^2(X)$. 
Pretend that $S_1^{-1}$ and $S_2^{-1}$ are ``differential operators''.
Suppose that we are interested in solving
\begin{equation}\label{eqnSUBbgeqns}
\begin{split}
S_1^{-1}\phi_1 +\cW_1(\phi_1,\phi_2) &= \al_1\\
S_2^{-1}\phi_2+\cW_2(\phi_1,\phi_2)  &= \al_2
\end{split}
\end{equation}
for $\phi_1,\phi_2$ as functions of complex fields $\al_1,\al_2$. Suppose further that
we are thinking of the $\cW_j$'s as small. We would like to write the 
solution as a perturbation of the $\cW_1=\cW_2=0$ solution
$\phi_1=S_1\al_1$, $\phi_2=S_2\al_2$. So we substitute
\begin{equation*}
\phi_1=S_1\big(\al_1+\ga_1\big) \qquad 
\phi_2=S_2\big(\al_2+\ga_2\big) 
\end{equation*}
into \eqref{eqnSUBbgeqns}, giving
\begin{align*}
\ga_1 + \cW_1\big(S_1(\al_1+\ga_1)\,,\,S_2(\al_2+\ga_2)\big)&=0\\
\ga_2 + \cW_2\big(S_1(\al_1+\ga_1)\,,\,S_2(\al_2+\ga_2)\big)&=0
\end{align*}
This is of the  form \eqref{eqnSUBfixedpteqn} with
\begin{align*}
\vec f(\vec\al)
&=\left[\begin{matrix}
   - \cW_1\big(S_1\al_1\,,\,S_2\al_2\big)\\
   - \cW_2\big(S_1\al_1\,,\,S_2\al_2\big)
         \end{matrix}\right]
\\
\noalign{\vskip0.1in}
\vec L(\vec\al,\vec\ga)
&=\left[\begin{matrix}
         - \cW_1\big(S_1\ga_1\,,\,S_2\al_2\big)
         - \cW_1\big(S_1\al_1\,,\,S_2\ga_2\big)
        \\
         - \cW_2\big(S_1\ga_1\,,\,S_2\al_2\big)
         - \cW_2\big(S_1\al_1\,,\,S_2\ga_2\big)
       \end{matrix}\right]
\\
\noalign{\vskip0.1in}
\vec B(\vec\al,\vec\ga)(u)
&=\left[\begin{matrix}
         - \cW_1\big(S_1\ga_1\,,\,S_2\ga_2\big)\\
         - \cW_2\big(S_1\ga_1\,,\,S_2\ga_2\big)
         \end{matrix}\right]
\end{align*}
To  apply Proposition \ref{propSUBeqnsoln} to Example \ref{exSUBbackgroundfield}, 
fix any $\fm,\wf>0$ 
and use the norm $\tn \phi_j\tn$ with metric $\fm |\,\cdot\,|$ and 
weight factors $\wf$ to measure analytic maps like $\phi_j(\al_1,\al_2)$. 
See  Definition \ref{defSUBkrnel}.c. The weight factor $\wf$ is used for 
both $\al_1$ and $\al_2$. 
Like in \cite[\S IV]{CPR} and
\cite[Definition \defCPCoperatornorm]{CPC} we define, for any linear operator $S:L^2(X)\rightarrow L^2(X)$, 
the ``weighted'' $\ell^1$--$\ell^\infty$ norm
\begin{equation*}
\|S\|_{\fm}=\max\Big\{
      \sup_{y\in X}\sum_{x\in X}|S(x,y)|e^{\fm|y-x|}\ ,\ 
      \sup_{x\in X}\sum_{y\in X}|S(x,y)|e^{\fm|y-x|} \Big\}
\end{equation*}
\end{example}
 Proposition \ref{propSUBeqnsoln} can be applied to this situation:
\begin{corollary}\label{corSUBexample}
Let $K>0$. Write $\bar S= \max\limits_{j=1,2}\|S_j\|_{\fm}$
and $\bar W= \max\limits_{j=1,2}\|W_j\|_{\fm}$ and assume that 
\begin{equation*}
\bar S^2\bar W\,\wf < \min\big\{\sfrac{1}{12}\,,\,\sfrac{1}{2K}\big\}
\end{equation*}
Then there are field maps $\phi_1^{(\ge 2)}$, $\phi_2^{(\ge 2)}$ such that
\begin{align*}
\phi_1(\al_1,\al_2)
     &=S_1\al_1
      +\phi_1^{(\ge 2)}(\al_1,\al_2)\\
\phi_2(\al_1,\al_2)
     &=S_2\al_2
      +\phi_2^{(\ge 2)}(\al_1,\al_2)
\end{align*}
solves the equations \eqref{eqnSUBbgeqns} of  Example \ref{exSUBbackgroundfield} 
and obeys
\begin{equation*}
\TN \phi_{j}^{(\ge 2)}\TN
\le 2\bar S^3\,\bar W\,\wf^2
\end{equation*}
Furthermore 
$\phi_j^{(\ge 2)}$ is of degree at least two in $(\al_1,\al_2)$. 
The solution is unique in 
\begin{equation*}
\set{(\phi_1,\phi_2)\in L^2(X)\times L^2(X)}
           {\tn S_1^{-1}\phi_1\tn,
             \tn S_2^{-1}\phi_2\tn\le K\wf}
\end{equation*}
\end{corollary}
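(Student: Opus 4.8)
The plan is to deduce the corollary from Proposition \ref{propSUBeqnsoln}. Example \ref{exSUBbackgroundfield} already exhibits \eqref{eqnSUBbgeqns}, after the substitution $\phi_j=S_j(\al_j+\ga_j)$, as a system of the shape \eqref{eqnSUBfixedpteqn} for $(\ga_1,\ga_2)$, with $f_j=-\cW_j(S_1\al_1,S_2\al_2)$, with $L_j$ the two ``mixed'' terms $-\cW_j(S_1\ga_1,S_2\al_2)-\cW_j(S_1\al_1,S_2\ga_2)$, and with $B_j=-\cW_j(S_1\ga_1,S_2\ga_2)$. Each $B_j$ is homogeneous of degree $2$ in $\vec\ga$, so it has minimum degree at least $2$ there and, by Remark \ref{remSUBcB}.a, obeys $\tn B_j\tn'_{w_{\ka,\la}}=2\tn B_j\tn_{w_{\ka,\la}}$. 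Working with the metric $\fm|\cdot|$, the weight $\wf$ on each $\al_i$ and a weight $\la$ on each $\ga_j$ (to be chosen), the first step is to record
\begin{equation*}
\tn f_j\tn_{w_\ka}\le\bar S^2\bar W\wf^2,\qquad
\tn L_j\tn_{w_{\ka,\la}}\le 2\bar S^2\bar W\wf\la,\qquad
\tn B_j\tn_{w_{\ka,\la}}\le\bar S^2\bar W\la^2 .
\end{equation*}
Each bound comes from Corollary \ref{corSUBsubstitution} applied to the bilinear $\cW_j$ after the substitutions $\phi_i=S_i(\cdots)$, together with Remark \ref{remSUBlinear} (to replace $\tn S_i\,\cdot\,\tn_w$ by the relevant input weight times $\|S_i\|_\fm$) and the elementary estimate that bounds $\tn\cW_j\tn_w$ by $\|W_j\|_\fm$ times the product of the two input weights --- the last point being just that the tree length of three points is at most the sum of two of its sides.

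For existence and the bound on $\phi_j^{(\ge2)}$ I would apply Proposition \ref{propSUBeqnsoln} with $\la=\wf$ and contraction factor $\fc=4\bar S^2\bar W\wf$. The assumption $\bar S^2\bar W\wf<\tfrac{1}{12}$ gives $\fc<\tfrac13<1$ and, after inserting the three estimates, turns the two hypotheses of Proposition \ref{propSUBeqnsoln} into $4\bar S^2\bar W\wf^2\le\wf$ and $4\bar S^2\bar W\wf^2\le\fc\wf$, both of which hold. Part (a) then yields the unique $\vec\Ga\in\cB_1$ solving the fixed point system, with $\tn\Ga_j\tn_w\le\tfrac{1}{1-\fc}\tn f_j\tn_w$. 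Putting $\phi_j^{(\ge2)}=S_j\Ga_j$ and using Corollary \ref{corSUBsubstitution} and Remark \ref{remSUBlinear} once more gives $\TN\phi_j^{(\ge2)}\TN\le\bar S\,\tn\Ga_j\tn_w\le\tfrac{\bar S^3\bar W\wf^2}{1-\fc}<2\bar S^3\bar W\wf^2$. For the degree claim, note that the fixed point map of Proposition \ref{propSUBeqnsoln} carries the closed set of field maps of degree at least $2$ in $\vec\al$ into itself --- $\vec f$ is of degree $2$, while $\tilde L_j(\vec\Ga)$ and $\tilde B_j(\vec\Ga)$ are of degree $\ge2$ in $\vec\al$ whenever $\vec\Ga$ is --- so its fixed point $\vec\Ga$ lies there, and $\phi_j^{(\ge2)}=S_j\Ga_j$ inherits this since the linear $S_j$ does not change the $\vec\al$--degree.

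For uniqueness I would apply Proposition \ref{propSUBeqnsoln} a second time, now with the weight $\la=(K-1)\wf$ on the $\ga_j$ (the case $K\le1$ being trivial). With this $\la$ the left side of the first hypothesis is at most $\bar S^2\bar W\wf^2K^2$, so the first hypothesis reads $\bar S^2\bar W\wf\le\tfrac{K-1}{K^2}$, while the contraction condition becomes $2K\bar S^2\bar W\wf<1$ with $\fc=2K\bar S^2\bar W\wf$; both follow from $\bar S^2\bar W\wf<\min\{\tfrac{1}{12},\tfrac{1}{2K}\}$. The ball $\cB_1$ for this application is exactly $\{\vec\Ga:\tn\Ga_j\tn_w\le(K-1)\wf,\ j=1,2\}$. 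Finally, any solution $(\phi_1,\phi_2)$ of \eqref{eqnSUBbgeqns} that is a pair of field maps in $(\al_1,\al_2)$ automatically has $\phi_j-S_j\al_j$ of degree $\ge2$: the part of $\phi_j$ of degree $0$ vanishes because $\phi_j$ is a field map, and the part of degree $1$ is then forced to equal $S_j\al_j$ by the linear pieces of \eqref{eqnSUBbgeqns}. Hence $\ga_j=S_j^{-1}\phi_j-\al_j$ is of degree $\ge2$, so $\tn S_j^{-1}\phi_j\tn=\wf+\tn\ga_j\tn$, and the domain $\{\tn S_j^{-1}\phi_j\tn\le K\wf\}$ corresponds precisely to $\cB_1$; the uniqueness furnished by Proposition \ref{propSUBeqnsoln} is then exactly the uniqueness asserted here.

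The three norm estimates, and the arithmetic that feeds them into the two hypotheses, are routine. The delicate part is the bookkeeping of weights through the two nested substitutions --- inside $\cW_j$, then outward through $S_j$ --- and, above all, choosing $\la$ so that the contraction ball $\cB_1$ of Proposition \ref{propSUBeqnsoln} \emph{coincides with}, rather than merely is contained in, the domain in which uniqueness is claimed. That matching is what pins down $\la=(K-1)\wf$ and the precise constants $\tfrac{1}{12}$ and $\tfrac{1}{2K}$ in the hypothesis, and I expect it to be the main obstacle; everything else is a direct application of Proposition \ref{propSUBeqnsoln}, Corollary \ref{corSUBsubstitution} and Remark \ref{remSUBlinear}.
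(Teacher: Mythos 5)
Your existence argument is essentially the paper's: the same three kernel estimates $\tn f_j\tn_{w_\ka}\le\bar S^2\bar W\wf^2$, $\tn L_j\tn_{w_{\ka,\la}}\le2\bar S^2\bar W\wf\la$, $\tn B_j\tn_{w_{\ka,\la}}\le\bar S^2\bar W\la^2$, fed into Proposition \ref{propSUBeqnsoln}.a with $\ka=\la=\wf$ (the paper takes $\fc=\sfrac{1}{2}$ rather than your $4\bar S^2\bar W\wf$, which changes nothing), followed by $\phi_j^{(\ge2)}=S_j\Ga_j$ and Remark \ref{remSUBlinear}. Your explicit handling of the degree claim is fine, and is in fact more than the paper writes down.

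The gap is in the uniqueness step. You invoke Proposition \ref{propSUBeqnsoln} a second time with $\la_j=(K-1)\wf$ and assert that both of its hypotheses follow from $\bar S^2\bar W\,\wf<\min\{\sfrac{1}{12},\sfrac{1}{2K}\}$. The first hypothesis, $\tn f_j\tn+\tn L_j\tn+\tn B_j\tn\le\la_j$, becomes $\bar S^2\bar W\wf^2K^2\le(K-1)\wf$, i.e. $\bar S^2\bar W\wf\le\frac{K-1}{K^2}$, and this does \emph{not} follow from the stated assumption when $K$ is close to $1$: for $K=1.05$ it demands $\bar S^2\bar W\wf\le 0.046$, while the assumption only gives $<\sfrac{1}{12}\approx 0.083$. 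So the proposition cannot be used as a black box here. The argument is salvageable, because the at-most-one-solution conclusion needs only the contraction estimate on the ball of radius $(K-1)\wf$ --- that is, the second hypothesis, which does hold with $\fc=2K\bar S^2\bar W\wf<1$ --- and not the self-mapping property; but you would have to say this explicitly and step outside the statement of Proposition \ref{propSUBeqnsoln}. The paper avoids the issue entirely by a direct computation: it writes the two solutions as $S_j\Phi_j$ and $S_j(\Phi_j+\de\Phi_j)$ with $S_j\Phi_j$ the constructed solution, subtracts the two copies of \eqref{eqnSUBbgeqns}, and uses the bilinear bound on $\cW_j$ together with $\tn\Phi_1\tn\le\sfrac{7}{6}\wf$ and $\tn\Phi_2+\de\Phi_2\tn\le K\wf$ to obtain
\begin{equation*}
\tn\de\Phi_1\tn+\tn\de\Phi_2\tn
\le 2\max\big\{\sfrac{7}{6},K\big\}\,\bar S^2\bar W\wf\,
\big(\tn\de\Phi_1\tn+\tn\de\Phi_2\tn\big)
\end{equation*}
whose coefficient is $<1$ for every $K>0$ under the stated assumption. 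The asymmetric use of the two bounds ($\sfrac{7}{6}\wf$ for the constructed solution, $K\wf$ for the competitor) is exactly what makes the constants $\sfrac{1}{12}$ and $\sfrac{1}{2K}$ suffice uniformly in $K$; your symmetric treatment of the two solutions is what forces the unprovable condition $\bar S^2\bar W\wf\le\frac{K-1}{K^2}$.
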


\begin{proof}
In Example \ref{exSUBbackgroundfield} we wrote the equations 
\eqref{eqnSUBbgeqns} in the form
\begin{equation}\label{eqnSUBexeqn}
\vec\ga
=\vec f(\vec\al)+\vec L(\vec\al,\vec\ga)+\vec B\big(\vec\al,\vec\ga\big)
\end{equation}
Now apply Proposition \ref{propSUBeqnsoln}.a and Remark \ref{remSUBcB}.a 
with $r=s=2$ and
\begin{equation*}
d_{\rm max}=2\qquad
\fc=\half\qquad
\ka_1= \ka_2 = \la_1=\la_2=\wf
\end{equation*}
Since
\begin{align*}
\tn f_j\tn_w &\le  \|S_1\|_\fm\|S_2\|_\fm\|W_j\|_{\fm}\ \ka_1\ka_2\cr
\tn L_j\tn_{w_{\ka,\la}}
  &\le \|S_1\|_\fm\|S_2\|_\fm\|W_j\|_{\fm}\ \big(\la_1\ka_2
            +\ka_1\la_2\big) \\
\tn B_j\tn_{w_{\ka,\la}}
    &\le \|S_1\|_\fm\|S_2\|_\fm\|W_j\|_{\fm}\la_1\la_2
\end{align*}
By hypothesis, $\tn f_j\tn_w,\,\tn L_j\tn_{w_{\ka,\la}},
\,\tn B_j\tn_{w_{\ka,\la}}<\sfrac{1}{6}\la_j$
and Proposition \ref{propSUBeqnsoln}.a gives a solution $\vec\Ga(\vec\al)$
to \eqref{eqnSUBexeqn} that obeys the bound
\begin{equation*}
\tn \Ga_j\tn_w\le 2\|S_1\|_\fm\|S_2\|_\fm\|W_j\|_{\fm}\wf^2
\end{equation*}
Setting
\begin{alignat*}{3}
\phi_1(\al_1,\al) 
&=S_1\al_1 +S_1  \Ga_1(\al_1,\al_2) &\qquad
\phi^{(\ge 2)}_1(\al_1,\al_2) &= S_1  \Ga_1(\al_1,\al_2) \\
\phi_2(\al_1,\al) 
&=S_2\al_2 +S_2\Ga_2(\al_1,\al_2)&\qquad
\phi^{(\ge 2)}_2(\al_1,\al_2) &= S_2 \Ga_2(\al_1,\al_2) 
\end{alignat*}
we have all of the claims, except for uniqueness.

We now prove uniqueness. Assume that 
$\phi_j=S_j\Phi_j$ and that
$\phi_j=S_j(\Phi_j+\de\Phi_j)$ 
both solve \eqref{eqnSUBbgeqns}, 
with $\tn \Phi_j+\de\Phi_j\tn \le K\wf$ and
with $S_j\Phi_j$ being the solution constructed above.
Then $\de\Phi_j$ is a solution of
\begin{align*}
\de\Phi_1 &=-\cW_1\big(S_1(\Phi_1+\de\Phi_1)\,,\,
                      S_2(\Phi_2+\de\Phi_2)\big)
          +\cW_1\big(S_1\Phi_*\,,\,
                      S_2\Phi\big)\\
\de\Phi_2&=-\cW_2\big(S_2(\Phi_2+\de\Phi_2)\,,\,
                      S_1(\Phi_1+\de\Phi_1)\big)
       +\cW_2\big(S_2\Phi\,,\,
                      S_1\Phi_1\big)\cr
\end{align*}
Since
\begin{equation*}
\TN \cW_j\big(S_1\al_1\,,\, S_2\al_2\big) \TN
\le \|W_j\|_\fm\,\tn S_1\al_1\tn\,\tn S_2\al_2\tn
\le \|W_j\|_\fm\,\|S_1\|_\fm\|S_2\|_\fm\,\tn \al_1\tn\,\tn \al_2\tn
\end{equation*}
we have
\begin{align*}
\tn \de\Phi_1\tn &\le  \|W_1\|_\fm\,\|S_1\|_\fm\,\|S_2\|_\fm\,
                       \big\{\tn \de\Phi_1\tn\,\tn \Phi_2+\de\Phi_2\tn
                          +\tn \Phi_1\tn\,\tn \de\Phi_2\tn\big\} \\
\tn \de\Phi_2\tn &\le  \|W_2\|_\fm\,\|S_1\|_\fm\,\|S_2\|_\fm\,
                       \big\{\tn \de\Phi_1\tn\,\tn \Phi_2+\de\Phi_2\tn
                          +\tn \Phi_1\tn\,\tn \de\Phi_2\tn\big\} \cr
\end{align*}
By hypothesis
\begin{equation*}
\tn \Phi_1\tn\le \wf+2\|S_1\|_\fm\|S_2\|_\fm\|W_j\|_{\fm}\wf^2
          \le\sfrac{7}{6}\wf\qquad
\tn \Phi_2+\de\Phi_2\tn\le K\wf
\end{equation*}
so that
\begin{align*}
\tn \de\Phi_1\tn+\tn \de\Phi_2\tn
  &\le \big(\|W_1\|_\fm+\|W_2\|_\fm\big)\,\|S_1\|_\fm\,\|S_2\|_\fm\,
      \max\big\{\sfrac{7}{6},K\big\}\wf\ 
      \big(\tn \de\Phi_1\tn+\tn \de\Phi_2\tn\big) \\
  &\le \bar S^2\bar W \wf\ 
      2\max\big\{\sfrac{7}{6},K\big\}\ 
      \big(\tn \de\Phi_1\tn+\tn \de\Phi_2\tn\big)
\end{align*}
thereby forcing $\tn \de\Phi_*\tn = \tn \de\Phi\tn=0$.
\end{proof}

\newpage
\appendix
\section{A Generalisation of Young's Inequality}
\begin{lemma}\label{lemSUBgenLoneLinfty}
Let $n\in\bbbn$. For each $1\le\ell\le n$, let
\begin{itemize}[leftmargin=*, topsep=2pt, itemsep=0pt, parsep=0pt]
\item $(X_\ell,d\mu_\ell)$ be a measure space, 
\item $f_\ell:X_\ell\rightarrow\bbbc$ be measureable and
\item $p_\ell\in (0,\infty]$.
\end{itemize}

\noindent
Let $K:\cartprod\limits_{\ell=1}^n X_\ell\rightarrow\bbbc$ have 
finite $L^1$--$L^\infty$ norm and assume that
$\sum\limits_{\ell=1}^n\sfrac{1}{p_\ell}=1$. Then
\begin{align*}
\bigg|\int_{\cartprod\limits_{\ell=1}^n X_\ell} 
             K(x_1,\cdots,x_n)\smprod_{\ell=1}^n f_\ell(x_\ell)
             \smprod_{\ell=1}^n d\mu_\ell(x_\ell)\bigg|
&\le \|K\|_{L^1-L^\infty} \smprod_{\ell=1}^n \|f_\ell\|_{L^{p_\ell}(d\mu_\ell)}
\end{align*}
\end{lemma}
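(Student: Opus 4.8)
The plan is to deduce the estimate from the generalized H\"older inequality after factoring $|K|$ across its $n$ arguments. First I would dispose of the indices with $p_\ell=\infty$. Since $\sum_{\ell=1}^n\sfrac{1}{p_\ell}=1$, at least one $p_\ell$ is finite; put $S=\set{\ell}{p_\ell=\infty}$ and $T=\set{\ell}{p_\ell<\infty}$, so that $T\ne\emptyset$ and $\sum_{\ell\in T}\sfrac{1}{p_\ell}=1$. For $\ell\in S$, bound $|f_\ell(x_\ell)|$ by $\|f_\ell\|_{L^\infty(d\mu_\ell)}$ and, using Tonelli's theorem, carry out the integrations in the variables $x_\ell$, $\ell\in S$, first. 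This replaces $K$ by the nonnegative kernel $K_T\big((x_\ell)_{\ell\in T}\big)=\int|K|\,\smprod_{\ell\in S}d\mu_\ell(x_\ell)$ on $\cartprod_{\ell\in T}X_\ell$, and directly from the definition of the $L^1$--$L^\infty$ norm one has $\|K_T\|_{L^1-L^\infty}\le\|K\|_{L^1-L^\infty}$, because for each $\ell\in T$ and $d\mu_\ell$-a.e.\ $x_\ell$, $\int|K_T|\smprod_{m\in T\setminus\{\ell\}}d\mu_m(x_m)=\int|K|\smprod_{m\ne\ell}d\mu_m(x_m)\le\|K\|_{L^1-L^\infty}$. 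Hence it suffices to treat the case in which every $p_\ell$ is finite.

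So assume $p_\ell\in(0,\infty)$ for all $\ell$. Since each $\sfrac{1}{p_\ell}\in(0,1)$ and $\sum_\ell\sfrac{1}{p_\ell}=1$, the pointwise identity $|K|=\smprod_{\ell=1}^n|K|^{1/p_\ell}$ holds, so the integrand is dominated by $\smprod_{\ell=1}^n\big(|K|^{1/p_\ell}|f_\ell(x_\ell)|\big)$. Applying the generalized H\"older inequality on $\big(\cartprod_\ell X_\ell,\smprod_\ell d\mu_\ell\big)$ with exponents $p_1,\dots,p_n$ gives
\begin{equation*}
\bigg|\int K\smprod_{\ell=1}^n f_\ell(x_\ell)\smprod_{\ell=1}^n d\mu_\ell(x_\ell)\bigg|
\le\smprod_{\ell=1}^n\bigg(\int|K|\,|f_\ell(x_\ell)|^{p_\ell}\smprod_{m=1}^n d\mu_m(x_m)\bigg)^{\!1/p_\ell}.
\end{equation*}
For each fixed $\ell$, Tonelli lets me perform the $x_m$, $m\ne\ell$, integrations first, and the defining property of the $L^1$--$L^\infty$ norm gives $\int|K|\,\smprod_{m\ne\ell}d\mu_m(x_m)\le\|K\|_{L^1-L^\infty}$ for $d\mu_\ell$-a.e.\ $x_\ell$; hence the $\ell^{\rm th}$ factor is at most $\big(\|K\|_{L^1-L^\infty}\|f_\ell\|_{L^{p_\ell}(d\mu_\ell)}^{p_\ell}\big)^{1/p_\ell}$. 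Multiplying over $\ell$ and using $\sum_\ell\sfrac{1}{p_\ell}=1$ yields exactly $\|K\|_{L^1-L^\infty}\smprod_{\ell=1}^n\|f_\ell\|_{L^{p_\ell}(d\mu_\ell)}$, as claimed.

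All the steps — the factorization of $|K|$, the invocation of H\"older, and the Tonelli rearrangements — are routine; the only points deserving care are the reduction removing the $p_\ell=\infty$ indices (in particular, checking that passing to the partially integrated kernel $K_T$ does not increase the $L^1$--$L^\infty$ norm) and observing that the hypothesis $\sum_\ell\sfrac{1}{p_\ell}=1$ is precisely what licenses the generalized H\"older step. I do not expect a genuine obstacle here: this is the standard multilinear interpolation argument.
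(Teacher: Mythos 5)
Your argument is correct and is essentially the paper's own proof: factor $|K|=\smprod_\ell|K|^{1/p_\ell}$, apply the generalized H\"older inequality on the product space, and bound each resulting factor via the $L^1$--$L^\infty$ norm of $K$. The only difference is that you make the reduction for the indices with $p_\ell=\infty$ explicit, where the paper simply invokes ``the usual interpretations when some $p_\ell=\infty$''; this is a harmless elaboration of the same argument.
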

\begin{proof}
 We'll use the short hand notations 
$dm(x_1,\cdots,x_n)= \prod\limits_{\ell=1}^n d\mu_\ell(x_\ell)$
and $X= \cartprod\limits_{\ell=1}^n X_\ell $. By H\"older (with the usual
interpretations when some $p_\ell=\infty$),
\begin{align*}
&\bigg|\int_X K(x_1,\cdots,x_n)\smprod_{\ell=1}^n f_\ell(x_\ell)\ 
             dm(x_1,\cdots,x_n)\bigg| \\
&\hskip0.5in\le 
  \int_X \smprod_{\ell=1}^n\big\{\big|K(x_1,\cdots,x_n)\big|^{1/p_\ell}\
                                          |f_\ell(x_\ell)|\big\}\ 
             dm(x_1,\cdots,x_n)\\
&\hskip0.5in=  \smprod_{\ell=1}^n
  \bigg[\int_X \big|K(x_1,\cdots,x_n)\big|\,|f_\ell(x_\ell)|^{p_\ell}\ 
              \smprod_{\ell=1}^n d\mu_\ell(x_\ell)\bigg]^{1/p_\ell} \cr
&\hskip0.5in\le  \smprod_{\ell=1}^n
  \bigg[\|K\|_{L^1-L^\infty} \int_{X_\ell}|f_\ell(x_\ell)|^{p_\ell}\ 
               d\mu_\ell(x_\ell)\bigg]^{1/p_\ell} \\
&\hskip0.5in =\|K\|_{L^1-L^\infty} 
           \smprod_{\ell=1}^n \|f_\ell\|_{L^{p_\ell}(d\mu_\ell)}
\end{align*}
\end{proof}

\newpage
\bibliographystyle{plain}
\bibliography{refs}

\end{document}